\renewcommand\footnotetextcopyrightpermission[1]{} 
  \providecommand\BibTeX{{%
    \normalfont B\kern-0.5em{\scshape i\kern-0.25em b}\kern-0.8em\TeX}}}
\def\BibTeX{{\rm B\kern-.05em{\sc i\kern-.025em b}\kern-.08em
    T\kern-.1667em\lower.7ex\hbox{E}\kern-.125emX}}
\newtheorem{theorem}{Theorem}[section]
\newtheorem{lemma}[theorem]{Lemma}
\newcommandx{\hp}[2][1=]{\todo[inline,linecolor=red,backgroundcolor=red!25,bordercolor=blue,#1]{#2}}
\newcommandx{\zw}[2][1=]{\todo[linecolor=yellow,backgroundcolor=yellow!40,bordercolor=brown,#1]{#2}}
\tikzset{every picture/.style={line width=0.75pt}} 
\newacronym{cots}{COTS}{commercial off-the-shelf}
\newacronym{LLC}{LLC}{last-level cache}
\newacronym{LRU}{LRU}{least-recently used}
\newacronym{wcl}{WCL}{worst-case latency}
\newacronym{noc}{NoC}{network-on-chip}
\newacronym{msi}{MSI}{Modified-Shared-Invalid}
\newacronym{l1i}{L1I\$}{L1 instruction cache}
\newacronym{l1d}{L1D\$}{L1 data cache}
\newacronym{l2c}{L2\$}{L2 cache}
\newacronym{l3c}{L3\$}{L3 cache}
\newacronym{qlt}{QLT}{Queue Lookup Table}
\newacronym{sqs}{SQ}{Sequencer}
\newcommand{\slot}[2]{\ensuremath{s^{#1}_{#2}}}
\newcommand{\cua}{\ensuremath{c_{ua}}}
\newcommand{\ci}{\ensuremath{c_i}}
\newcommand{\cj}{\ensuremath{c_j}}
\newcommand{\Sw}{\ensuremath{SW}}
\newcommand{\slx}{\ensuremath{set_{LLC}(X)}}
\newcommand{\cl}{\ensuremath{c(l)}}
\newcommand{\clx}{\ensuremath{c(l_x)}}
\NewDocumentCommand { \giveme } { m m }
{
    \clist_item:nn { #1 } { #2 }
}
\newcommand{\dst}[1]{\ensuremath{d^{\giveme{#1}{1}}_{\giveme{#1}{2}}}}
\newcommand{\dstf}[1]{\ensuremath{d^{\giveme{#1}{1}}_{\giveme{#1}{2}}(\giveme{#1}{3})}}
\newcommand{\stua}{\slot{t}{\cua}}
\newcommand{\cir}[1]{\ensuremath{\textcircled{\raisebox{-0.9pt}#1}}}
\newcommand{\dd}{\textsf{distance}}
\newcommand{\cbut}[1]{\ensuremath{C\backslash\{#1\}}}
\newcommand{\rop}{reduce one period}
\newcommand{\mcua}{\ensuremath{m_{\cua}}}
\newcommand{\stsqr}{set sequencer}
\newcommand{\partition}{$\mathcal{P}$}
\newcommand{\onetdm}{1S-TDM}
\theoremstyle{remark}
\newtheorem{observation}{Observation}
\newcommand*\circled[1]{\tikz[baseline=(char.base)]{
            \node[shape=circle,draw,inner sep=0.5pt] (char) {#1};}}
\newcommand{\ssqx}{\texttt{SS}}
\newcommand{\nssx}{\texttt{NSS}}
\newcommand{\ptnx}{\texttt{P}}
\newcommand{\ssq}[1]{\texttt{SS(#1)}}
\newcommand{\nss}[1]{\texttt{NSS(#1)}}
\newcommand{\ptn}[1]{\texttt{P(#1)}}
\begin{document}

\title{Predictable Sharing of Last-level Cache Partitions for Multi-core Safety-critical Systems}

\author{Zhuanhao Wu}
\email{zhuanhao.wu@uwaterloo.ca}
\affiliation{%
  \institution{University of Waterloo}
}

\author{Hiren Patel}
\email{hiren.patel@uwaterloo.ca}
\affiliation{%
  \institution{University of Waterloo}
}

\renewcommand{\shortauthors}{Wu and Patel.}

\begin{abstract}

Last-level cache (LLC) partitioning is a technique to provide temporal isolation and low worst-case latency (WCL) bounds when cores access the shared LLC in multicore safety-critical systems. A typical approach to cache partitioning involves allocating a separate partition to a distinct core. A central criticism of this approach is its poor utilization of cache storage. Today’s trend of integrating a larger number of cores exacerbates this issue such that we are forced to consider shared LLC partitions for effective deployments. This work presents an approach to share LLC partitions among multiple cores while being able to provide low WCL bounds. 

\end{abstract}

\keywords{Last-level cache, Predictability, Cache partitioning}
\maketitle{}
\pagestyle{plain}

\section{Introduction}
The use of multicores in safety-critical systems offers an attractive opportunity to consolidate several functionalities onto a single platform with the benefits of reducing cost, size, weight, and power while delivering high performance. 
Although multicores are mainstay in general-purpose computing, their use in safety-critical systems is approached with caution. 
This is because multicores often share hardware resources to deliver their high performance, but since safety-critical systems must be certified, this makes guaranteeing compliance with safety standards increasingly challenging.
The central reason behind this difficulty is that shared resources complicate worst-case timing analysis necessary for applications deemed to be of high criticality. 
For instance, the automotive domain uses the ISO-26262~\cite{iso26262} standard, which identifies ASIL-D as the highest criticality application where a violation of its temporal behaviours may result in a significant loss of lives or injury. 

One such shared hardware resource is the shared last-level cache (LLC) that multiple cores access when they experience misses in their private caches. 
For example, the Kalray MPPA 3~\cite{kalraymppa2}  features an 80-core architecture with 16 cores in a cluster that share 4MB of LLC.
LLCs are an important component of the memory hierarchy to deliver high performance~\cite{navarro2019memory}. 
However, multiple cores accessing the LLC can introduce inter-core temporal interferences where one core evicts the data of another's resulting in large variations in execution times.
These interferences complicate worst-case latency (WCL) analysis, and often result in overly pessimistic worst-case bounds.
LLC partitioning has been proposed as a countermeasure to address these difficulties in using LLCs with multicores~\cite{cacheparteval,gracioli2015,lv2016}. 
LLC partitioning allocates a part of the LLC to each core that it can use. 
This provides temporal isolation to tasks executing on a core from other tasks executing on another core. 
However, there are multiple downsides to LLC partitioning: (1) it can significantly affect average-case performance, (2) it can lead to underutilization of cache capacity, and (3) prevent coherent data sharing~\cite{pred1}.
Downside (1) is a result of each core having a smaller part of the LLC. 
(2) happens when a core gets allocated a partition that it doesn't effectively use.
Lastly, for (3), conventional LLC partitioning disallows one core to access a partition of another core; thus, accessing shared data between cores in the LLC is prohibited. 
This prevents LLC caching of coherent data across multiple cores.
With the continued increase in demand for functionalities, and their consolidation onto a multicore platform, we expect these downsides to overwhelm the benefits of LLC partitioning, and force us to seriously consider sharing LLC partitions in the near future.

As a cautious step towards addressing these downsides, and possibly a refreshing alternative to traditional LLC partitioning approaches, we allow multiple cores to share partitions.
This requires us to determine the WCL of memory accesses from cores that miss in their private caches, and access the shared partition.
In this paper, we develop such a WCL analysis.
In doing so, we show that naively arbitrating cores' accesses to the shared LLC partitions results in a scenario where the WCL is unbounded.
We correct this unbounded scenario by showing that a restricted version of time-division multiplexing (TDM) policy called \onetdm{} can result in a WCL bound.
However, the resulting WCL bound is grossly pessimistic; it is proportional to the minimum of the cache capacity and LLC partition size of a given core and cube of the number of cores.
By methodically analyzing the critical instance that renders the WCL bound, we intuit a technique to significantly lower the WCL.
This technique yields a WCL that eliminates the dependency on the cache and partition sizes. 
For a 4-core setup with a 16-way LLC with 128 cache lines, our approach results in a WCL that is 2048 times lower.
We implement this technique in a hardware structure called the \stsqr{}.
We also show that careful sharing of cache partitions not only allows for a low WCL, but possibly higher average-case performance.
We envision the proposed work to complement existing efforts on LLC partitioning where certain tasks have their own partitions, but others share partitions; all of which depends on their performance and real-time requirements.
The following are our main contributions.
  \begin{itemize}
  \item We identify that naively using TDM to arbitrate accesses to the shared LLC partitions can result in an unbounded WCL. 
  We resolve this by showing that a \onetdm{} schedule prohibits this scenario.
    \item We develop a WCL analysis for a memory access to a shared LLC partition using the \onetdm{} arbitration policy.
    \item We propose a micro-architectural extension called the \stsqr{} that significantly lowers the WCL when sharing a LLC partition.
    \item We evaluate the proposed approach by implementing a simulation of the \stsqr{}.
  \end{itemize}

\section{Related works}
Cache partitioning~\cite{gracioli2015} reserves a portion of the cache to a task or a core either via hardware or software techniques~\cite{wang2017}.
The key role of cache partitioning is to improve temporal isolation to simplify the WCL analysis. 
However, as the number of cores increase, allocating distinct partitions to each core or each task can affect average-case performance and cache under-utilization.
Moreover, this prohibits deploying a large number of functionalities on the multicore as it may result in extremely small partitions to each functionality, which would adversely affect performance.
The proposed work seeks a middle ground where designers can judiciously share partitions with a subset of cores, and isolate others.
Prior works identified that accurately capturing  the contention of multiple cores sharing the LLC is  difficult~\cite{suhendra2008}, and attempts exist for shared cache for dual-core processor relying on knowledge of the application~\cite{zhang2012}.
Our work does not rely on application-specific knowledge and does not constraint the number of cores.

Authors in~\cite{cerrolaza20} noticed that interference between tasks exists when multiple tasks or cores share the LLC.
They proposed a time analysable shared LLC where the inter-task interference is bounded probabilistically.
Given their proposed technique, the exact worst-case interference in the LLC and the exact worst-case latency is not discernible.
Compared to \cite{cerrolaza20}, our analysis provides an exact (non-probabilistic) bound of a memory access in the LLC, and does not rely on MBPTA.
Our work assumes that a TDM bus arbitration has one slot in each period, which is common in controlling access to resources in safety-critical systems~\cite{pred1,rihani2015}.
We also identified the worst-case scenario for the LLC evictions, and that in the worst-case the latency can be unbounded if the arbitration has no constraint.
Moreover, our analysis does not rely on certain type of address mapping or replacement policy.
\section{System Model}
\label{sec:sysmodel}
\begin{figure}[t]
    \centering
    \resizebox{\columnwidth}{!}{\input{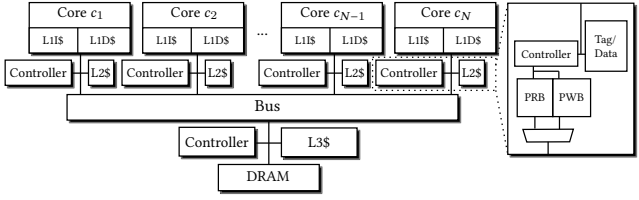}}
    \caption{Overview of system model.}\label{fig:system-model}
\end{figure}
\noindent\textbf{System hierarchy.}
We assume a system with $N$ cores and three levels of cache in the memory hierarchy (Figure~\ref{fig:system-model}).
Each core has a private \Gls{l1i} and a private \Gls{l1d}.
The private L1 caches of a core are connected to a \Gls{l2c}.
A set-associative \gls{LLC}, or the \Gls{l3c}, connects to all L2 caches, and interfaces with a DRAM directly.
The accesses from L2 to L3 is controlled by a shared bus.
The shared bus implements a time-division multiplexing (TDM) arbitration where equally sized slots are allocated to cores.
The L2 cache controller of each core only accesses the \gls{LLC} in the core's allocated slot, and the \gls{LLC} only responds to the L2 within the core's slot.
Since the L2 cache is private to each core, a core accessing the bus implies that the core's L2 cache controller access the bus.
An LLC partition isolates a part of the LLC for a specific core. 
We assume one task can be mapped to one core. 

Similar to prior works~\cite{pred1}, we assume each core has at most one outstanding memory request.
Before a core's request or write-back is placed on the bus, we assume that the request is buffered in a structure named pending request buffer (PRB), and the write-backs are buffered in a structure named pending write-back buffer (PWB). 
There is a predictable arbitration such as round-robin between PRB and PWB to choose from a request or a write-back to send on the bus at the beginning of the core's slot.
\newline
\noindent\textbf{Cache inclusion policy.}
We assume that the \gls{LLC} is inclusive of L2. 
This is a common setup in existing platforms. 
Suppose that core $c_{i}$'s request to the cache line at address $X$ is a miss in all its private caches, and the LLC. 
Then, for the LLC to respond with the provided data for $X$, the LLC must ensure the following: (1) that there is an empty line in the set that cache line $X$ maps to, (2) the cache line from a lower level memory in the memory hierarchy is fetched, and (3) the response to $c_i$ is sent in $c_i$'s slot.
An important property of inclusive caches is that an eviction of a cache line in the lower-level cache requires eviction of cache lines for the same address in upper-level caches. 
For our setup, an eviction in the LLC, would force evictions in both the L1 and L2 private caches that have the data.

\section{WCL with shared partitions}
\subsection{An unbounded WCL scenario}

\begin{figure}[tb]
    \centering
    \resizebox{\columnwidth}{!}{\tikzset{every picture/.style={line width=0.75pt}} 

\begin{tikzpicture}[x=0.75pt,y=0.75pt,yscale=-1,xscale=1]

\draw  [fill={rgb, 255:red, 255; green, 255; blue, 255 }  ,fill opacity=1 ] (540,120.39) -- (570,120.39) -- (570,140) -- (540,140) -- cycle ;
\draw  [fill={rgb, 255:red, 255; green, 255; blue, 255 }  ,fill opacity=1 ] (330,120.39) -- (360,120.39) -- (360,140) -- (330,140) -- cycle ;
\draw  [fill={rgb, 255:red, 169; green, 196; blue, 215 }  ,fill opacity=1 ] (150,90) -- (210,90) -- (210,110) -- (150,110) -- cycle ;
\draw  [fill={rgb, 255:red, 255; green, 255; blue, 255 }  ,fill opacity=1 ] (150,120.39) -- (210,120.39) -- (210,140) -- (150,140) -- cycle ;
\draw  [dash pattern={on 0.84pt off 2.51pt}]  (210,60) -- (210,190) ;
\draw  [dash pattern={on 0.84pt off 2.51pt}]  (420,60) -- (420,190) ;
\draw    (150,60) -- (150,190) ;
\draw  [dash pattern={on 0.84pt off 2.51pt}]  (270,60) -- (270,190) ;
\draw    (360,60) -- (360,190) ;
\draw  [fill={rgb, 255:red, 194; green, 218; blue, 159 }  ,fill opacity=1 ] (210,120.39) -- (270,120.39) -- (270,140) -- (210,140) -- cycle ;
\draw  [dash pattern={on 0.84pt off 2.51pt}]  (330,60) -- (330,190) ;
\draw  [dash pattern={on 0.84pt off 2.51pt}]  (480,60) -- (480,190) ;
\draw  [dash pattern={on 0.84pt off 2.51pt}]  (540,60) -- (540,190) ;
\draw    (570,60) -- (570,190) ;
\draw  [fill={rgb, 255:red, 255; green, 255; blue, 255 }  ,fill opacity=1 ] (270,120) -- (330,120) -- (330,140) -- (270,140) -- cycle ;
\draw  [fill={rgb, 255:red, 255; green, 255; blue, 255 }  ,fill opacity=1 ] (480,120.39) -- (540,120.39) -- (540,140) -- (480,140) -- cycle ;
\draw  [fill={rgb, 255:red, 255; green, 255; blue, 255 }  ,fill opacity=1 ] (360,120.39) -- (420,120.39) -- (420,140) -- (360,140) -- cycle ;
\draw  [fill={rgb, 255:red, 0; green, 0; blue, 0 }  ,fill opacity=1 ] (326.45,134.36) -- (330.25,140.04) -- (333.94,134.27) -- (331.43,134.3) -- (331.17,110.42) -- (328.69,110.44) -- (328.95,134.33) -- cycle ;
\draw  [fill={rgb, 255:red, 0; green, 0; blue, 0 }  ,fill opacity=1 ] (153.99,86.11) -- (150.25,80.39) -- (146.5,86.11) -- (149,86.11) -- (149,110) -- (151.49,110) -- (151.49,86.11) -- cycle ;
\draw  [fill={rgb, 255:red, 0; green, 0; blue, 0 }  ,fill opacity=1 ] (535.94,134.21) -- (539.75,139.96) -- (543.43,134.12) -- (540.92,134.15) -- (540.66,109.99) -- (538.17,110.02) -- (538.44,134.18) -- cycle ;
\draw  [fill={rgb, 255:red, 194; green, 218; blue, 159 }  ,fill opacity=1 ] (420,120.39) -- (480,120.39) -- (480,140) -- (420,140) -- cycle ;
\draw  [fill={rgb, 255:red, 232; green, 193; blue, 182 }  ,fill opacity=0.75 ] (210,90) -- (570,90) -- (570,110) -- (210,110) -- cycle ;
\draw  [fill={rgb, 255:red, 0; green, 0; blue, 0 }  ,fill opacity=1 ] (483.99,115.8) -- (480.24,110) -- (476.5,115.8) -- (479,115.8) -- (479,140) -- (481.49,140) -- (481.49,115.8) -- cycle ;

\draw (139,97.5) node [anchor=east] [inner sep=0.75pt]   [align=left] {$\displaystyle c_{ua}$};
\draw (151,44.6) node [anchor=north west][inner sep=0.75pt]  [font=\large] [align=left] {$\displaystyle s_{c_{ua}}^{t}$};
\draw (361,44.6) node [anchor=north west][inner sep=0.75pt]  [font=\large] [align=left] {$\displaystyle s_{c_{ua}}^{t+1}$};
\draw (139,127.5) node [anchor=east] [inner sep=0.75pt]   [align=left] {$\displaystyle c_{i}$};
\draw (180,100) node  [font=\large] [align=left] {Req X};
\draw (139,135.4) node [rotate=90,anchor=south east] [inner sep=0.75pt]  [font=\large] [align=left] {$\displaystyle set_{LLC}( X)$};
\draw (240,130.19) node  [font=\large] [align=left] {WB $\displaystyle l_{1}$};
\draw (300,130) node  [font=\large] [align=left] {Req X};
\draw (181,129.5) node  [font=\large] [align=left] {Evict $\displaystyle l_{1}$};
\draw (510,130.19) node  [font=\large] [align=left] {Req X};
\draw (390,129.5) node  [font=\large] [align=left] {Evict $\displaystyle l_{1}$};
\draw (151,152.5) node [anchor=west] [inner sep=0.75pt]  [font=\large] [align=left] {$\displaystyle l_{1} :c_{i}$};
\draw (151,167.5) node [anchor=west] [inner sep=0.75pt]  [font=\large] [align=left] {$\displaystyle l_{2} :c_{i}$};
\draw (151,182.5) node [anchor=west] [inner sep=0.75pt]  [font=\large] [align=left] {$\displaystyle l_{3} :c_{i}$};
\draw (210.92,152.5) node [anchor=west] [inner sep=0.75pt]  [font=\large] [align=left] {$\displaystyle l_{1} :-$};
\draw (210.92,167.5) node [anchor=west] [inner sep=0.75pt]  [font=\large] [align=left] {$\displaystyle l_{2} :c_{i}$};
\draw (210.92,182.5) node [anchor=west] [inner sep=0.75pt]  [font=\large] [align=left] {$\displaystyle l_{3} :c_{i}$};
\draw (270.92,152.5) node [anchor=west] [inner sep=0.75pt]  [font=\large] [align=left] {$\displaystyle l_{1} :c_{i}$};
\draw (270.92,167.5) node [anchor=west] [inner sep=0.75pt]  [font=\large] [align=left] {$\displaystyle l_{2} :c_{i}$};
\draw (270.92,182.5) node [anchor=west] [inner sep=0.75pt]  [font=\large] [align=left] {$\displaystyle l_{3} :c_{i}$};
\draw (360.92,152.5) node [anchor=west] [inner sep=0.75pt]  [font=\large] [align=left] {$\displaystyle l_{1} :c_{i}$};
\draw (360.92,167.5) node [anchor=west] [inner sep=0.75pt]  [font=\large] [align=left] {$\displaystyle l_{2} :c_{i}$};
\draw (360.92,182.5) node [anchor=west] [inner sep=0.75pt]  [font=\large] [align=left] {$\displaystyle l_{3} :c_{i}$};
\draw (420.92,152.5) node [anchor=west] [inner sep=0.75pt]  [font=\large] [align=left] {$\displaystyle l_{1} :-$};
\draw (420.92,167.5) node [anchor=west] [inner sep=0.75pt]  [font=\large] [align=left] {$\displaystyle l_{2} :c_{i}$};
\draw (420.92,182.5) node [anchor=west] [inner sep=0.75pt]  [font=\large] [align=left] {$\displaystyle l_{3} :c_{i}$};
\draw (480.92,152.5) node [anchor=west] [inner sep=0.75pt]  [font=\large] [align=left] {$\displaystyle l_{1} :c_{i}$};
\draw (480.92,167.5) node [anchor=west] [inner sep=0.75pt]  [font=\large] [align=left] {$\displaystyle l_{2} :c_{i}$};
\draw (480.92,182.5) node [anchor=west] [inner sep=0.75pt]  [font=\large] [align=left] {$\displaystyle l_{3} :c_{i}$};
\draw (345.5,76.5) node  [font=\large] [align=left] {...};
\draw (345,130.19) node  [font=\large] [align=left] {...};
\draw (554.5,77.5) node  [font=\large] [align=left] {...};
\draw (555,130.19) node  [font=\large] [align=left] {...};
\draw (179.5,79) node  [font=\large] [align=left] {$\displaystyle \textcircled{1} :c_{ua}$};
\draw (239.5,79) node  [font=\large] [align=left] {$\displaystyle \textcircled{2} :c_{i}$};
\draw (299.5,79) node  [font=\large] [align=left] {$\displaystyle \textcircled{3} :c_{i}$};
\draw (390.5,79) node  [font=\large] [align=left] {$\displaystyle \textcircled{4} :c_{ua}$};
\draw (450.5,79) node  [font=\large] [align=left] {$\displaystyle \textcircled{5} :c_{i}$};
\draw (510.5,79) node  [font=\large] [align=left] {$\displaystyle \textcircled{6} :c_{i}$};
\draw (450,130.19) node  [font=\large] [align=left] {WB $\displaystyle l_{1}$};

\end{tikzpicture}}
    \caption{\cua{} experiences an unbounded latency.}
    \label{fig:motivational-unbounded}
\end{figure}
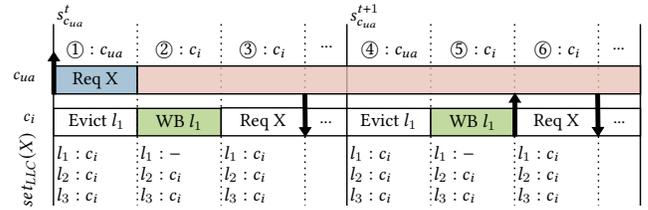

We show that an undesired consequence of the inclusive property, and multiple cores making accesses to the LLC is a situation where the WCL is unbounded. 
Using Figure~\ref{fig:motivational-unbounded}, we illustrate this scenario.
We assume a TDM arbitration policy with one slot for \cua{} and two slots for $c_i$.
We use \slot{t}{\ci} to denote the starting time of the $t$-th slot of core \ci{}.
Unambiguously, \slot{t}{\ci} also refers to the $t$-th slot of \ci{}.
At \stua{}, in \circled{1}, \cua{}'s request to a cache line $X$ misses in the private caches and the LLC.
Cache line $X$ is mapped to a full cache set \slx{} in the LLC; thus, the LLC evicts a cache line $l_1$ in \slx{}, which is also privately cached by $c_i$ denoted by $l_1 :c_i$. 
Note that a full cache set means that there are no empty cache lines in the set.
In \circled{2}, $c_i$ writes back $l_1$, and frees an entry in \slx{}.
Then, in \circled{3}, $c_i$ requests a cache line mapped to \slx{}, and gets the response within its slot because $l_1$ is available.
In \cua{}'s next slot $\slot{t+1}{\cua}$ (\circled{4}), \slx{} is full again preventing \cua{} from completing its request. 
This behavior can potentially repeat indefinitely, which shows that the interference at the LLC from other cores can cause unbounded WCL for the core under analysis.

\subsection{One-slot TDM schedule}
The scenario where \cua{} experiences unbounded latency happens when a core other than \cua{} is allowed to access the LLC multiple times before \cua{} can access the bus again.
An effective solution to prevent another core from occupying a free entry in \slx{}  is to constrain the TDM schedule to allocate only one slot per core in a period (Definition~\ref{def:one-slot}). 
Note that the crux of the issue is that an entry freed by \ci{} due to the eviction of cache line $l_1$ is occupied again by \ci{} again \textit{before} \cua{} can access the LLC. 
With \onetdm{}, we only allow one core to perform one access to the bus in a period.

\begin{definition}\label{def:one-slot}
A \textit{one-slot TDM schedule} (\onetdm) is a TDM schedule that has exactly one slot allocated to each core in every period.   
\end{definition}


\subsection{Key observations}
Although a \onetdm{} schedule guarantees a WCL bound for \cua{}, we show that the WCL is proportional to the minimum of the cache capacity of \cua{} and \cua{}'s LLC partition size $M$, and cube of the number of cores.
This results in a significantly large WCL. 
We illustrate this by making observations from two examples.


\begin{figure}[b]
    \centering
    \resizebox{\columnwidth}{!}{\input{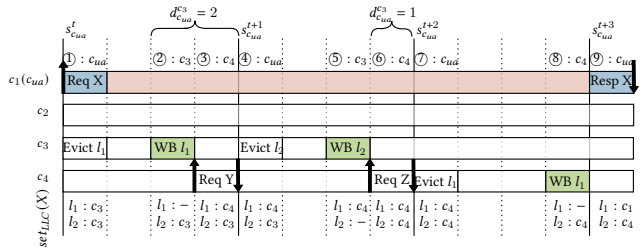}}
    \caption{\cua{}'s request to cache line $X$ eventually is completes.}
    \label{fig:motivational-distance-decrease}
\end{figure}
Consider the example in Figure~\ref{fig:motivational-distance-decrease} that has four cores, and a two-way set-associative LLC.
The \onetdm{} schedule is $\{\cua{},c_2,c_3,c_4\}$.
In \circled{1}, $\cua{}$ requests cache line $X$, which is not privately cached by \cua{}. Hence, its L2 cache controller issues request for $X$ at the beginning of \slot{t}{\cua} to the LLC.
Cache line $X$ maps to a cache set \slx{} in the LLC, and it also experiences a miss in the LLC.
Since \slx{} has no empty lines, it must evict one cache line in \slx{}. 
Suppose that $l_1 \in \slx{}$ is selected for eviction. 
Note that $l_1$ is privately cached in $c_3$ denoted by $l_1:c_3$. 
Hence, $c_3$ must also evict $l_1$ from its private caches in \circled{2}, resulting in a free entry in \slx{}, denoted as $l_1:-$.
Clearly, the cache line to replace depends on the replacement policy.
In this work, we assume a replacement policy that can select any of the cache lines. 
Even though we select $l_1$ in this example, our observation is agnostic of replacement policy.
As a result, the observation applies to any replacement policy, including \gls{LRU}.

Next, in \circled{3}, $c_4$'s L2 cache controller issues a request, and occupies the free entry, preventing \cua{} from obtaining its response in slot $\slot{t+1}{\cua}$.
As a result, the LLC must evict a cache line in \slx{} to make space for \cua{}, and the LLC evicts $l_2$ in slot \slot{t+1}{\cua} which is privately cached by $c_3$.
As before, $c_3$ evicts $l_2$ in \circled{5}, but it is occupied by $c_4$ in \circled{6}.
Note that in the period starting at \slot{t+2}{\cua{}}, this pattern of interfering \cua{} from receiving its response cannot continue.
When $c_4$ evicts $l_1$ in \circled{8}, there is no other core that can occupy the free entry; thus, \cua{} will get its response in \slot{t+3}{\cua{}} at \circled{9}.
This is guaranteed to occur because whenever any core other than \cua{} occupies a free entry in \slx{}, \cua{} gets \textit{closer} to being able to occupy a free entry in \slx{}. 
For example, in \circled{7}, both cache lines are privately cached by $c_4$, and any core making a request to \slx{} resulting in a miss requires $c_4$ to evict it from its private caches as well. 
Since, \cua{}'s slot comes after $c_4$'s, it is guaranteed to occupy the free cache line entry due to $c_4$'s eviction. 
We characterize this \textit{closer} effect by introducing a notion of \dd{} (Definition~\ref{def:dist}).
%
\begin{definition}\label{def:dist}
For a \onetdm{} schedule $S$, the \dd{} between two cores \ci{} and \cj{}, \dst{\ci,\cj}, is the number of slots between the start of slot of \ci{}, and the start of \cj{}'s next slot.

\end{definition}

\begin{corollary}\label{corollary:dist-range}
Given a \onetdm{} schedule $S$ with $N$ cores, the \dd{} between any two cores \ci{} and \cj{}, \dst{\ci,\cj}, $1 \leq \dst{\ci,\cj} \leq N$.
\end{corollary}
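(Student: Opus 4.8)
The plan is to prove both inequalities by exploiting the rigid periodic structure that Definition~\ref{def:one-slot} imposes on $S$. Because a \onetdm{} schedule allocates exactly one slot to each of the $N$ cores in every period, each period consists of precisely $N$ consecutive slots, and the schedule is simply the infinite repetition of this single period. The first step is therefore to fix a labelling: I would assign to each core a position in $\{0,1,\ldots,N-1\}$ according to the order in which its unique slot appears within a period, and extend this to absolute slot indices by adding $N$ for each successive period. With this labelling the start of any core's slot is pinned down, and the phrase ``\cj{}'s next slot'' in Definition~\ref{def:dist} becomes the smallest absolute slot index of \cj{} that is strictly greater than the start index of \ci{}'s reference slot. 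The \dd{} $\dst{\ci,\cj}$ is then just the difference of these two indices.

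For the lower bound $1 \le \dst{\ci,\cj}$, I would argue directly from the definition: \cj{}'s next slot begins strictly after \ci{}'s slot begins, and since slot boundaries are discrete the gap is at least one slot, with equality exactly when \cj{} is the slot immediately following \ci{}. For the upper bound $\dst{\ci,\cj} \le N$, the key observation is that any window of $N$ consecutive slots contains each core exactly once, which is just the one-slot property applied to a shifted period. Consider the window of $N$ slots that begins with \ci{}'s reference slot; it must contain \cj{} somewhere. If $\ci{} \neq \cj{}$, then \cj{} occupies one of the $N-1$ positions strictly after \ci{} inside this window, giving $\dst{\ci,\cj} \le N-1$; if $\ci{} = \cj{}$, the next occurrence of the same core is exactly one full period later, giving $\dst{\ci,\cj} = N$. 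In both cases the stated bound holds.

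The main subtlety — rather than a genuine obstacle — is the careful handling of the wrap-around case, namely when \cj{}'s position in the period is at or before \ci{}'s, so that \cj{}'s next slot lies in the following period. I would treat this through the window argument above, which sidesteps an explicit case split on positions and automatically accounts for the cyclic nature of the schedule; it also cleanly isolates the only situation, $\ci{} = \cj{}$, in which the maximum \dd{} of $N$ is actually attained. A secondary point worth stating explicitly is the convention that ``next'' is strict, which is precisely what makes the lower bound equal to $1$ rather than $0$.
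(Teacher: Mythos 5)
Your proof is correct; the paper states this corollary without any proof, treating it as immediate from Definitions~\ref{def:one-slot} and~\ref{def:dist}, and your window argument is precisely the natural formalization of that implicit reasoning (and is consistent with the paper's worked example, where $\dst{c_3,\cua}=2$ and $\dst{c_4,\cua}=1$ under the schedule $\{\cua,c_2,c_3,c_4\}$). Your explicit identification of $\ci = \cj$ as the only case attaining the upper bound $N$ is a useful clarification the paper leaves unstated.
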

Given a cache line $l$, we will use $\dstf{\cl,\cj,x}$ as a convenience to return the \dd{} between the core that has privately cached $l$, and  \cj{} at $x$. 
Using Figure~\ref{fig:motivational-distance-decrease}, with a TDM schedule of $\{\cua{}, c_2, c_3, c_4\}$, $\dst{c_3,\cua{}} = 2$ and $\dst{c_4,\cua{}} = 1$. 
With Definition~\ref{def:dist}, the example in Figure~\ref{fig:motivational-distance-decrease} can be interpreted in terms of \dd{}:
the core that caches $l_1$ changes from $c_3$ in slot \cir{1}, with a \dd{} of $\dst{c_3, \cua}=2$, to $c_4$ in slot \cir{4}, with a \dd{} of $\dst{c_4,\cua}=1$, and finally freed in slot \circled{8}.
Similarly, the core that caches $l_2$ changes from $c_3$ in slot \circled{1} to $c_4$ in slot \circled{7}, and thus the \dd{} of the core that caches $l_2$ decreases from $2$ to $1$.
These example scenarios highlight the following key observations. 
\begin{observation}\label{obs:1}
Given a \onetdm{} schedule $S$, the \textit{\dd{}} of cache lines in \slx{} decrease when \cua{} does not perform  write-backs after issuing its request to cache line $X$ and before receiving its response for $X$.
\end{observation}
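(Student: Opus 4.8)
The plan is to assign to every occupied way of \slx{} the \dd{} from its current private owner to \cua{} (Definition~\ref{def:dist}), and to prove that, across the consecutive periods during which \cua{}'s request for $X$ stays outstanding, this quantity never increases and strictly decreases for the one way that changes hands. First I would freeze the configuration at the start of an arbitrary such period $\slot{t'}{\cua}$: because the \gls{LLC} is inclusive and \slx{} is full, each way holds a line privately cached by some core, so its distance $\dstf{\cl,\cua,t'}$ is well defined and, by Corollary~\ref{corollary:dist-range}, lies in $[1,N]$. The hypothesis that \cua{} issues no write-back in this window is used here to guarantee that \cua{}'s slot is always spent on the pending request for $X$; consequently the only ways in which the ownership of a way in \slx{} can change are (i) the \gls{LLC} evicting a victim to admit $X$, freeing one way, and (ii) a different core occupying that freed way.

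Next I would carry out the single-period argument. Suppose the \gls{LLC} evicts victim $l$, owned by \cj{}; by inclusion \cj{} must write $l$ back, which happens in \cj{}'s slot and leaves the way free. For \cua{} to be blocked again in $\slot{t'+1}{\cua}$, some other core $c_k$ must seize this way first, and its slot must lie strictly after \cj{}'s write-back yet strictly before \cua{}'s next slot. The \onetdm{} constraint (Definition~\ref{def:one-slot}) gives each core exactly one slot per period, so $c_k$ sits strictly nearer to \cua{}'s next slot than \cj{} does; reading off Definition~\ref{def:dist} yields $\dst{c_k,\cua} < \dst{\cj,\cua}$. Hence the distance recorded on the reoccupied way strictly drops, while every untouched way keeps its owner and its distance. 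Summing over the period, no distance grows and the evicted way's distance strictly shrinks, which is precisely Observation~\ref{obs:1}. Because the victim $l$ is arbitrary, the argument is independent of the replacement policy, as the surrounding discussion requires.

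I expect the main obstacle to be establishing that the no-write-back hypothesis is truly necessary rather than merely convenient. I would argue that if \cua{} spent one of its slots on a write-back, that period could elapse with \slx{} left untouched and hence with no distance decrease, stalling the monotone progress the observation asserts; and I would separately have to rule out any \cua{}-triggered event that reinstates a \cua{}-owned line at the maximal distance $N$ and thereby raises a recorded distance. Making this case analysis exhaustive --- over the interaction of the PRB/PWB arbitration with the \gls{LLC}'s eviction logic --- is the delicate part, whereas the distance-comparison core of the argument follows directly from Definition~\ref{def:dist} and the \onetdm{} property.
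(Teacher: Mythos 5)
Your proposal is correct and rests on exactly the same two facts the paper uses when it formalizes this observation in Lemma~\ref{lemma:decreasing-2} and Corollary~\ref{corollary:decreasing-3}: a freed way in \slx{} that survives until \cua{}'s next slot would complete the request (so, absent \cua{} write-backs, it must be reoccupied earlier), and under \onetdm{} the reoccupying core's slot lies between the freeing core's slot and \cua{}'s next slot, forcing a strictly smaller \dd{}. The only cosmetic difference is that you argue directly per period while the paper argues by contradiction (and defers the strict-decrease timing, which is delayed by up to $2(n-1)$ periods of PWB queuing, to Corollary~\ref{corollary:decreasing-3}); the substance is the same.
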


The decreasing \dd{} articulates the effect of the core under analysis getting \textit{closer} to occupying a freed cache line entry in \slx{}. 
A direct consequence of observation~\ref{obs:1} is that \cua{}'s request will eventually complete as described in observation~\ref{obs:2}.
\begin{observation}\label{obs:2}
\cua{}'s request will eventually complete. 
\end{observation}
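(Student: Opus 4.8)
The plan is to prove termination by a well-founded descent on the \dd{}: I would exhibit an integer-valued measure that is bounded below and that strictly decreases in every period during which \cua{} remains blocked, and conclude that only finitely many such periods can occur. The natural measure is the sum of the \dd{}s of the cache lines currently occupying \slx{}. Observation~\ref{obs:1} already supplies the crucial monotonicity: under the \onetdm{} schedule, and provided \cua{} issues no write-back between its request for $X$ and the arrival of the response, the \dd{}s of the lines in \slx{} never increase, so the measure is non-increasing from one period to the next.

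I would then sharpen monotonicity to a strict decrease whenever \cua{} is prevented from completing. Under \onetdm{} every period is either one in which \cua{} captures a free entry in \slx{} (and completes) or one in which some other core re-occupies the entry that was freed on \cua{}'s behalf. In the latter case the recycling core must own a slot lying strictly between the write-back that freed the entry and \cua{}'s next slot; hence it is strictly closer to \cua{} than the line's previous owner, and the \dd{} of that entry drops by at least one. Consequently the measure strictly decreases in every non-completing period.

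Combining this with Corollary~\ref{corollary:dist-range}, which bounds each \dd{} below by $1$ and therefore bounds the measure below, gives the result: an integer-valued quantity that is bounded below cannot strictly decrease infinitely often, so after finitely many periods no non-completing period remains and \cua{}'s request completes, which is Observation~\ref{obs:2}. The same bound makes the endgame concrete: once every line in \slx{} has \dd{}~$1$, its owner's slot immediately precedes \cua{}'s, so no core can interpose itself and \cua{} necessarily seizes the next freed entry.

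The step I expect to be the main obstacle is making this descent rigorous rather than merely illustrative. The figures follow a single line changing hands, whereas in general \slx{} holds several lines simultaneously, entries may be evicted and later refetched, and the core that seizes a freed entry need not be the one whose eviction created it. I would therefore have to show that no admissible interleaving of interfering accesses permitted by \onetdm{} can refill \slx{} without strictly lowering the measure --- in particular ruling out any return of the unbounded pattern of Figure~\ref{fig:motivational-unbounded} --- and that the write-back-free premise of Observation~\ref{obs:1} is actually maintained throughout the wait, so that the descent never loses its hypothesis.
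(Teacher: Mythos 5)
Your plan is essentially the paper's own argument: the paper presents Observation~\ref{obs:2} as a direct consequence of Observation~\ref{obs:1}, noting that each of the $w$ lines in \slx{} has a \dd{} bounded between $1$ and $n$ (Corollary~\ref{corollary:dist-range}) and can therefore suffer at most $w(n-1)$ decrements before \cua{} must capture a freed entry --- exactly your well-founded descent on the summed \dd{}. One step of your sharpening is too strong, though: the measure does \emph{not} strictly decrease in every non-completing period. The core that privately caches the evicted line may need several further periods to drain up to $n-2$ other pending write-backs from its PWB before it actually frees the entry in \slx{}; during those periods nothing in \slx{} changes and the measure is merely non-increasing. The paper's Lemma~\ref{lemma:decreasing-2} and Corollary~\ref{corollary:decreasing-3} repair exactly this by guaranteeing a strict decrease only within a window of $2(n-1)$ periods, which still yields termination and feeds the quantitative bound of Theorem~\ref{thm:bound-simple}. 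Your closing caveat about write-backs is also on target, but the paper resolves it by scoping rather than by proof: Observation~\ref{obs:2} is asserted only under the write-back-free premise of Observation~\ref{obs:1}, and the case where \cua{} does perform write-backs is deferred to Observation~\ref{obs:3} and absorbed into the $m$ factor of Theorem~\ref{thm:bound-simple}.
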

The main intuition behind this observation is that once the lines in \slx{} are privately cached by $c_4$, a request for $X$ by \cua{} will succeed in the following period (Figure~\ref{fig:motivational-distance-decrease}). 
This is because $c_4$ must evict the privately cached line due to inclusive property, which results in a free entry in \slx{} that \cua{} can occupy.
When $n \leq N$ cores share a partition with a \onetdm{} schedule and there are $w$ ways in \slx{}, for \cua{} to occupy an entry in \slx{} in the worst-case, the \dd{} of all $w$ cache lines must experience the largest decrements.
Since the maximal \dd{} is $n$ when \cua{} caches a cache line, and the minimal \dd{} is $1$, \cua{} must wait for the \dd{} of all $w$ cache lines to decrease from $n$ to $1$, accounting for $w(n-1)$ decrements in the worst-case.

Note that we have not considered scenarios where \cua{} performs write-backs before receiving the response for its request. 
When a core other than \cua{} requests a cache line that is privately cached by \cua{}, then \cua{} would also need to perform write-backs due to inclusivity. 
The effect of write-backs on the \dd{} is summarized in observation~\ref{obs:3}.
After a write-back by \cua{}, cache lines in \slx{} can be privately cached by a core with a larger \dd{} compared to before \cua{} performs the write-back. 
\begin{observation}\label{obs:3}
Given a \onetdm{}, when \cua{} performs a write-back after issuing its request to a cache line $X$ and before receiving the response, the \dd{} of cache lines in \slx{} increases. 
\end{observation}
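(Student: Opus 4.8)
The plan is to follow the scenario-driven style of Observations~\ref{obs:1} and~\ref{obs:2} and exhibit a reachable state in which a forced write-back by \cua{} strictly raises the \dd{} of a line in \slx{}. First I would pin down exactly when \cua{} is compelled to write back while its request for $X$ is still outstanding: by the inclusive property this occurs precisely when some interfering core $c_k \neq \cua{}$ issues a request to a line that also maps to \slx{}, that request misses in the LLC, and the (arbitrary) replacement policy selects as its victim a line privately cached by \cua{}. Because the analysis assumes a replacement policy that may select any line --- exactly the freedom already exploited in Figure~\ref{fig:motivational-distance-decrease} --- this victim choice is admissible and the state is realizable.

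Next I would track \dd{} across this event using Definition~\ref{def:dist} and Corollary~\ref{corollary:dist-range}. The operative fact is that under the \onetdm{} schedule \cua{} owns a single slot per period, and the back-invalidation consumes it: rather than using its upcoming slot to capture a freed entry for $X$, \cua{} must service the write-back. I would therefore take a state in which the cascade of Observation~\ref{obs:1} has already driven some entry of \slx{} down to a core immediately preceding \cua{}'s next slot (so that entry has \dd{} $1$ and \cua{} is poised to occupy it). Since the slot is spent on the write-back, \cua{} forfeits this capture, and by the same slot-ordering argument underlying Observation~\ref{obs:1} the freed entry is instead re-occupied by an interfering core whose slot lies earlier in the period, hence one with strictly larger \dd{} than the value $1$ the cascade had reached. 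This yields the claimed increase and exhibits a write-back as the exact dual of the progress in Observation~\ref{obs:1}. I would close by instantiating the argument on a concrete four-core, $w$-way example analogous to Figure~\ref{fig:motivational-distance-decrease}, making the ``can be'' existential of the statement explicit.

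The main obstacle is an apparent tension that must be handled carefully. The very line \cua{} writes back is, by Corollary~\ref{corollary:dist-range}, already at the maximal \dd{} of $N$ (it is cached by \cua{}, so $\dst{\cua,\cua}=N$), and hence its own entry can only \emph{decrease} in \dd{} once refilled. The increase therefore cannot be charged to the written-back entry; it must be charged to a \emph{second} entry whose cascade progress is lost because \cua{}'s unique slot is consumed. Making this rigorous requires a two-entry bookkeeping argument that cleanly separates the entry being written back (whose \dd{} falls from $N$) from the low-\dd{} entry whose capture is forfeited (whose \dd{} rises), together with a verification that the forfeited entry is genuinely re-occupied by a higher-\dd{} interfering core rather than left free until \cua{}'s following slot. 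I expect this separation, rather than any single distance computation, to be the crux of the proof.
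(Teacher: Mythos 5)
Your proposal is correct and takes essentially the same route as the paper: the paper supports Observation~\ref{obs:3} with the scenario of Figure~\ref{fig:motivational-distance-increase} (later formalized in Lemma~\ref{lemma:write-back}), in which \cua{}'s unique slot is consumed by a write-back so that a freed low-\dd{} entry of \slx{} is captured by a core with strictly larger \dd{}. The ``two-entry bookkeeping'' you flag as the crux is exactly the separation the paper's example already makes between the line $l$ that \cua{} must write back and the distinct line $l_1\in\slx{}$ whose \dd{} rises from $1$ to $3$ when $c_2$ rather than \cua{} occupies the freed entry.
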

\begin{figure}[tb]
    \centering
    \resizebox{\columnwidth}{!}{\input{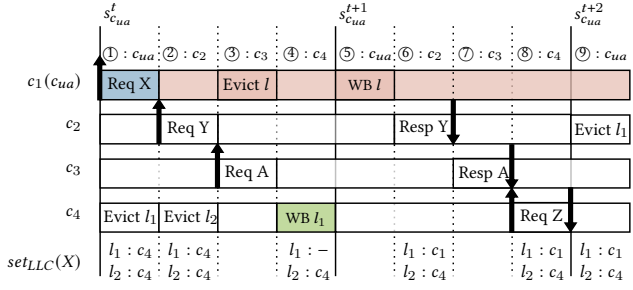}}
    \caption{Distance of core caching $l_1$ increases after \slot{t+1}{\cua}.}
    \label{fig:motivational-distance-increase}
\end{figure}
Figure~\ref{fig:motivational-distance-increase} shows a scenario with \cua{} performing write-backs. 
There are four cores with \onetdm{} schedule of $\{\cua{}, c_2, c_3, c_4\}$.
Cache lines $l_1$ and $l_2$ are in \slx{}, and are initially privately cached by $c_4$.
In \circled{1}, \cua{} issues a request to cache line $X$ that misses in the private caches and LLC.
This is followed by $c_2$ issuing a request to cache line $Y$ such that $Y \in \slx{}$ in \circled{2}, and it also misses in the private caches and LLC. 
The LLC selects to evict another line $l_2$, which needs to be evicted by $c_4$. 
In \circled{3}, $c_3$ issues a request to cache line $A$ that causes $c_1$ to evict a cache line $l$.
In \circled{4}, $c_4$ writes back $l_1$ freeing up an entry in \slx{}.
In the write-back slot of \slot{c_1}{t+1} \circled{5}, $c_1$'s request to $X$ cannot be satisfied because $c_1$ has to perform an eviction.
The free entry is thus occupied by $c_2$.
Note that the core that is caching $l_1$ has changed from $c_4$ to $c_2$ and thus the \dd{} of the core caching $l_1$ increased from $\dst{c_4,c_1}=1$ to $\dst{c_2,c_1}=3$.
In general, write-backs from \cua{} allow a core with a larger \dd{} to occupy a free entry in \slx{} that would have satisfied \cua{}'s request; thus, the \dd{} of cores caching cache lines in \slx{} does not always decrease as in the case of Observation 1 when \cua{} performs write-backs.
Combining these two observations, we develop an analysis that bounds the worst-case latency of a request.

\subsection{WCL analysis for \onetdm{} schedule}

We first prove that the \dd{} in \slx{} only decreases when no write-back by \cua{} is involved with Corollary~\ref{corollary:decreasing-3} (Observation 1).
Then, we bound the latency required for the \dd{} to decrease.
Next, in Lemma~\ref{lemma:write-back}, we show that \dd{} increases when \cua{} writes back cache lines (Observation 2).
Hence, when \cua{} waits for its response, the \dd{} in \slx{} shows an alternating pattern of decreasing and increasing.
Finally, Theorem~\ref{thm:bound-simple} combines Corollary~\ref{corollary:decreasing-3} and Lemma~\ref{lemma:write-back} to express the WCL of \cua{}'s request.

We use our key observations to formulate an analysis to compute the WCL.
Consider a multicore configuration with $N$ cores interacting over the shared bus using a \onetdm{} schedule $S$, and $n$ cores sharing a partition \partition{} in the LLC ($n \leq N$) with \cua{} being one of the $n$ cores.
Throughout the analysis, we assume that \cua{}'s request for cache line $X$ misses in its private caches and the LLC, and \slx{} is full before \cua{}'s request to cache line $X$ is completed.

\begin{lemma}\label{lemma:decreasing-2}
If \cua{}'s request is not completed at slot \slot{t+T}{\cua}, \cua{} does not perform any write-backs, and $l_x\in\slx$ is evicted in response to \cua{}'s request in \slot{t}{\cua}, then
\begin{align*}
    \forall l\in\slx : \ \dst{\cl,\cua} (\slot{t+T}{\cua}) \leq \dst{\cl,\cua}(\slot{t}{\cua}).
\end{align*}

\end{lemma}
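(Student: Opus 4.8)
The plan is to reduce the statement to a single-period monotonicity claim and telescope it over the $T$ periods between \slot{t}{\cua} and \slot{t+T}{\cua}. The quantity I would track is, for each way of \slx{}, the core that privately holds the line occupying that way, and hence the \dd{} of that owner to \cua{}. Since a core's \dd{} to \cua{} is small precisely when its slot falls just before \cua{}'s next slot, the goal becomes: over one period the owner of each way can only be replaced by a core that is \emph{closer} to \cua{}, so the per-way \dd{} never increases from one \cua{}-slot to the next. Applying this $T$ times (induction on $T$, with $T=0$ giving equality) then proves the lemma.

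First I would extract two facts from the hypotheses that fix the regime. Because \cua{} issues no write-backs, in each of its slots the arbitration lets it present its outstanding request for $X$; hence, had \slx{} contained a free way at the start of any \cua{}-slot, \cua{} would have occupied it and completed. As \cua{} is assumed not completed before \slot{t+T}{\cua}, it follows that \slx{} is full at the start of every \cua{}-slot in the interval, and moreover that no freed way ever survives \emph{past} a \cua{}-slot: any entry \cua{} does not itself take must already have been re-occupied by another core before \cua{}'s slot arrives.

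Next I would examine how a single way's owner changes during one period. By inclusivity, the line in a way is replaced only if the LLC evicts it --- forcing its current owner $c_k$ to write it back in $c_k$'s slot, thereby freeing the way --- after which some core re-occupies the freed entry. Since \cua{} does not complete, such a re-occupier $c_j$ is distinct from \cua{}, and its slot lies strictly after $c_k$'s freeing slot yet strictly before \cua{}'s next slot. Measuring both distances to that same \cua{}-slot, this ordering forces $\dst{c_j,\cua} < \dst{c_k,\cua}$, i.e.\ the new owner is strictly closer to \cua{}. A way whose line is never evicted keeps its owner, and hence its \dd{}, unchanged. Chaining the (possibly several) owner changes occurring within the period shows each way's \dd{} at the period's end is at most its \dd{} at the start, which is the single-period claim.

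The hard part will be the strictness argument of the previous step: showing a re-occupier is always closer to \cua{} and never farther. This is exactly what distinguishes the present no-write-back lemma from Observation~\ref{obs:3} (Lemma~\ref{lemma:write-back}), where a farther core may grab a freed entry. The danger is a freed way persisting across a period boundary, so that re-indexing the distances to the \emph{following} \cua{}-slot could let a core that was farther in the original period claim it. I would rule this out using the regime established above: absent \cua{} write-backs, \cua{} itself claims any entry still free when its slot comes, so no free way ever crosses a \cua{}-slot before completion. Every re-occupation therefore occurs within the same period as the freeing and strictly before the bounding \cua{}-slot, which is precisely what yields $\dst{c_j,\cua} < \dst{c_k,\cua}$ and forbids any increase.
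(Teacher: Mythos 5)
Your proposal is correct and rests on the same crux as the paper's proof: an increase in a way's \dd{} would require the freeing core and the re-occupying core to straddle one of \cua{}'s slots, at which point the free entry would have let \cua{} complete since it performs no write-backs. The paper phrases this as a single contradiction argument on the final inequality, whereas you organize it as a per-period monotonicity step telescoped by induction, but the underlying reasoning is essentially identical.
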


\begin{proof}

We prove the lemma by contradiction and assume that $\exists l\in\slx{}: \dst{\cl,\cua} (\slot{t+T}{\cua}) > \dst{\cl,\cua}(\slot{t}{\cua})$.
Then, before \slot{t+T}{\cua}, there must exist two cores \ci{} and \cj{}, such that \ci{} frees the entry $l$ and \cj{} occupies $l$ after \ci{} frees $l$.
The freeing-then-occupying by \ci{} and \cj{} increases the \dd{} of $l$ to be greater than $\dst{\clx,\cua}(\stua)$, that is, $\dst{\ci,\cua}\leq\dst{\clx,\cua}(\stua)<\dst{\cj,\cua}$.
Assume that $\ci{}$ frees $l$ in $\slot{q}{\ci}$ and $\cj{}$ occupies $l$ in $\slot{r}{\cj}$, then $\slot{q}{\ci} < \slot{r}{\cj}$.
Furthermore, because $\dst{\ci,\cua} < \dst{\cj,\cua}$ and \onetdm{} is deployed, \slot{r}{\cj} must be in the next period of \slot{q}{\ci}.
There must be a slot of \cua{}, \slot{p}{\cua}, such that $\slot{q}{\ci} < \slot{p}{\cua} < \slot{r}{\cj}$.
Consequently, there is a free entry in \slx{} in slot $\dst{p,\cua}$.
Because \cua{}'s request is not completed in slot \slot{t+T}{\cua}, it is not completed in slot $\dst{p,\cua}$.
The only reason that $\cua$'s request is not completed in its slot when there is a free entry is that \cua{} is performing a write-back, which contradicts the hypothesis that \cua{} does not perform any write-backs.

\end{proof}

\begin{corollary}\label{corollary:decreasing-3}
If \cua{} does not perform any write-backs, $l_x\in\slx$ is evicted in response to \cua{}'s request in \slot{t}{\cua} and \cua{}'s request is not completed in \slot{t+2(n-1)}{\cua} then
\begin{align*}
    \dst{\clx,\cua} (\slot{t+2(n-1)}{\cua}) < \dst{\clx,\cua}(\slot{t}{\cua})
\end{align*}

\end{corollary}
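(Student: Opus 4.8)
The plan is to obtain the weak inequality directly from Lemma~\ref{lemma:decreasing-2} and then sharpen it to a strict one. Applying Lemma~\ref{lemma:decreasing-2} with $T = 2(n-1)$ to the line $l_x$ gives $\dst{\clx,\cua}(\slot{t+2(n-1)}{\cua}) \leq \dst{\clx,\cua}(\slot{t}{\cua})$, so the whole task reduces to excluding equality over the window of $2(n-1)$ slots of \cua{}.

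Let \ci{} be the core privately caching $l_x$ at \slot{t}{\cua}, so that $\dst{\ci,\cua} = \dst{\clx,\cua}(\slot{t}{\cua})$; call this value $d$. I would first observe that the distance attached to the entry that held $l_x$ can stay equal to $d$ only while \ci{} has not yet written $l_x$ back. Indeed, once \ci{} writes $l_x$ back the entry becomes free, and, as argued in the proof of Lemma~\ref{lemma:decreasing-2}, a free entry cannot survive until a slot of \cua{}, because \cua{} performs no write-back and would otherwise occupy it and complete its request, contradicting the hypothesis. Hence the entry must be refilled within the same period by a core \cj{} whose slot lies strictly between \ci{}'s slot and \cua{}'s next slot; by Definition~\ref{def:dist} any such \cj{} has $\dst{\cj,\cua} < d$, which is exactly the strict drop we want. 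It therefore suffices to prove that \ci{} writes $l_x$ back no later than \slot{t+2(n-1)}{\cua}.

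This timing bound is the step I expect to be the main obstacle. Under \onetdm{} core \ci{} gets one bus slot per period, and the round-robin arbitration between its PRB and PWB makes it alternate a request with each write-back, so every write-back queued ahead of $l_x$ costs \ci{} two of its slots. The worst case stacks the maximum number of such write-backs before $l_x$: only the $n-2$ cores other than \cua{} and \ci{} that share \partition{} can force an inclusion-induced write-back onto \ci{} without at the same time advancing \cua{}'s own request, so at most $n-2$ write-backs precede $l_x$ in \ci{}'s PWB. Draining them and then issuing the write-back of $l_x$ takes at most $2(n-2)+2 = 2(n-1)$ of \ci{}'s slots, so $l_x$ is evicted by \slot{t+2(n-1)}{\cua}. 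The delicate points I would have to nail down here are (i) that no more than $n-2$ write-backs can be interposed ahead of $l_x$ --- the argument must use the at-most-one-outstanding-request assumption together with the non-increasing distances of Lemma~\ref{lemma:decreasing-2} to stop the adversary from repeatedly refilling \ci{}'s PWB --- and (ii) that throughout this draining the tracked entry keeps holding $l_x$ at distance $d$, so that the single strict decrement is not miscounted.

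Putting the pieces together, by \slot{t+2(n-1)}{\cua} the entry that held $l_x$ has been vacated by \ci{} and, since \cua{}'s request is still outstanding, re-occupied by a core at distance strictly below $d$; the monotonicity of Lemma~\ref{lemma:decreasing-2} rules out any intermediate increase that could cancel this, giving $\dst{\clx,\cua}(\slot{t+2(n-1)}{\cua}) < \dst{\clx,\cua}(\slot{t}{\cua})$.
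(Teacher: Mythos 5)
Your proposal is correct and follows essentially the same route as the paper's own proof: identify the core \ci{} holding $l_x$, bound the slot of its write-back of $l_x$ by the at-most-$(n-1)$-pending-write-backs argument (giving roughly $2(n-2)+1$ slots), observe that the freed entry must be re-occupied before \cua{}'s next slot by a core \cj{} with $\dst{\cj,\cua}<\dst{\ci,\cua}$, and invoke Lemma~\ref{lemma:decreasing-2} to preserve the strict drop through slot \slot{t+2(n-1)}{\cua}. The ``delicate points'' you flag are in fact glossed over in the paper as well, so your level of rigor matches the original.
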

\begin{proof}
Assume that at $\stua{}$, cache line $l_x$ is evicted, but it is also privately cached by $c_i$ such that $\dst{\ci,\cua}=\dst{\clx,\cua}(\stua)$.
At a later slot for \ci{}, $\slot{q}{\ci}$ where $l_x$ is written back, $q\leq t+2(n-2) + 1 < t+2(n-1)$. This is because there can be at most $(n-1)$ pending write-backs in \ci{}'s PWB including the write-back for $l_x$. 
Before \cua{}'s next slot, another core \cj{} must occupy $l_x$ so that \cua{}'s request is not completed.
Due to \onetdm{}, if $\cj$ occupies $l_x$, $\dst{cj,\cua}<\dst{\ci,\cua}\leq\dst{\clx,\cua}(\stua)$.
Applying Lemma~\ref{lemma:decreasing-2}, $\slot{t+2(n-1)}{\cua}\leq\dst{cj,\cua}<\dst{\clx,\cua}(\stua)$.
\end{proof}


\begin{lemma}\label{lemma:write-back}
Given a slot \stua{} where \cua{} performs write-back,
then there exists an execution such that
\begin{align*}
    \forall l\in\slx{}: \dst{\cl,\cua} ( \slot{t+1}{\cua} )& \geqslant \dst{\cl{},\cua} (\slot{t}{\cua}).
\end{align*}
\end{lemma}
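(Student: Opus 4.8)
The statement is existential---it asks only for one execution realizing the distance increase of Observation~\ref{obs:3}---so the plan is to exhibit such an execution by generalizing the scenario of Figure~\ref{fig:motivational-distance-increase}, and then verify the inequality holds for every way of \slx{}. First I would fix the setup dictated by the hypothesis and the running assumptions: \cua{}'s request to $X$ misses in its private caches and the LLC, \slx{} is full, and at \stua{} the round-robin arbiter issues a write-back from \cua{} rather than the request for $X$. To make this write-back concrete (and consistent with inclusivity) I would arrange an earlier access by some other core to a line that \cua{} privately caches but that maps to a \emph{different} LLC set; its eviction forces \cua{} to write that line back, populating \cua{}'s PWB so that a write-back is available to be issued in \stua{}.

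Next I would drive the distance of a single way upward, which is the heart of Observation~\ref{obs:3}. I pick a way $l\in\slx{}$ whose occupant at \stua{} is the core immediately preceding \cua{} in $S$, so that $\dst{\cl,\cua}(\stua)=1$, the minimum possible by Corollary~\ref{corollary:dist-range}. I then schedule that preceding core to free $l$ (via a conflicting miss that evicts it) just before \cua{}'s next slot, and schedule a strictly farther core---for instance the core immediately following \cua{}, whose distance is $n-1$---to carry a pending request mapping to \slx{}. Because \cua{} issues a write-back at \stua{} instead of completing its request, it cannot claim the freed entry; the farther core does. Hence the occupant of $l$ changes from a distance-$1$ core to a distance-$(n-1)$ core, giving $\dst{\cl,\cua}(\slot{t+1}{\cua}) > \dst{\cl,\cua}(\stua)$.

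It then remains to establish the universally quantified claim, namely that no other way of \slx{} suffers a distance decrease in this execution. Here I would use the same mechanism exploited in the proof of Lemma~\ref{lemma:decreasing-2}: a way's distance can strictly decrease only when a far occupant frees it and a strictly closer core reclaims the entry within the same period. I would construct the remaining cores' accesses during $[\stua{},\slot{t+1}{\cua}]$ so that no such freeing-then-closer-occupying event occurs for any other way---untouched ways keep their occupants (distance unchanged), while the one modified way only has its distance raised. Combining the raised way with the unchanged ways yields $\dst{\cl,\cua}(\slot{t+1}{\cua}) \geq \dst{\cl,\cua}(\stua)$ for every $l\in\slx{}$, as required.

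The main obstacle I anticipate is the bookkeeping within a single period: I must simultaneously (i) keep \cua{} genuinely forced into a write-back at \stua{} by the inclusive-cache mechanics, (ii) have the freed entry reclaimed by a strictly farther core rather than by \cua{} or by a closer one, and (iii) certify that none of the auxiliary accesses used to set up (i) and (ii) inadvertently triggers a distance-decreasing re-occupation of some other way. Making these three constraints coexist in one consistent \onetdm{} schedule---while staying agnostic to the replacement policy, as the surrounding observations insist---is the delicate part; the inequality itself is immediate once the execution is pinned down.
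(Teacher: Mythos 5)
Your overall strategy---exhibit a concrete execution realizing the distance increase, then check the remaining ways are untouched---is the same constructive approach the paper takes, and your treatment of the unaffected ways is fine. However, the core mechanism of your construction has a timing flaw under \onetdm{}. You place the freeing of $l$ (by the distance-$1$ core, i.e.\ the core immediately preceding \cua{}) \emph{after} \stua{} and just before \slot{t+1}{\cua}, and then ask the distance-$(n-1)$ core (the one immediately following \cua{}) to reclaim it before \slot{t+1}{\cua}. That cannot happen: under \onetdm{} a core with smaller \dd{} occupies a \emph{later} slot in the period than a core with larger \dd{}, so the far core's only slot in the window $(\stua{},\slot{t+1}{\cua})$ precedes the freeing, and its next slot lies beyond \slot{t+1}{\cua}. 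This is exactly the ordering fact exploited in the proof of Lemma~\ref{lemma:decreasing-2}: a freeing by \ci{} followed by an occupation by a farther \cj{} must straddle a slot of \cua{}, hence spans two periods. Consequently, in your execution the entry freed just before \slot{t+1}{\cua} is either still free at \slot{t+1}{\cua} or claimed by \cua{} itself (completing its request), and the strict increase you intend to exhibit does not materialize by \slot{t+1}{\cua}.

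The paper's construction sidesteps this by arranging for the entry to be \emph{already free at} \stua{} (freed in the preceding period); the write-back \cua{} performs at \stua{} is precisely what prevents \cua{} from claiming that free entry, and a far core \cj{} then occupies it in its own slot, which lies strictly between \stua{} and \slot{t+1}{\cua} because far cores come early in the period. The distance of that way thus rises from ``free'' to $\dst{\cj,\cua}$ within the single period the lemma speaks about. To repair your proof, move the freeing event into the period before \stua{} (or equivalently start from a free way at \stua{}) and let the far core's pending request to \slx{} be serviced in its slot of the period beginning at \stua{}; the rest of your argument then goes through.
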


\begin{proof}
In \stua{}, let us assume that \cua{} writes back a cache line as a response to an eviction caused by another core. 
Since \cua{} is performing a write-back, it cannot issue a request; thus, its request cannot complete. 
Hence, for each of the free cache line entry $l$ in \slx{} at \stua{}, a core \cj{} can make a request to $l$ after \stua{} which completes within one slot.
Then, $\dst{\cl,\cua} (\slot{t+1}{\cua}) = \dst{\cj,\cua} >\dst{\cl,\cua} (\stua{})$.
For other cache lines $l' \in \slx{}$ that are privately cached by other cores that are not evicted due to accesses made by some other cores, $\dst{\cl,\cua} (\slot{t+1}{\cua}) = \dst{\cl,\cua} (\stua{})$ holds trivially.
\end{proof}

Corollary~\ref{corollary:decreasing-3} and lemma~\ref{lemma:write-back} provide the cornerstone to derive the worst-case latency for \cua{} in theorem~\ref{thm:bound-simple}. 
%
        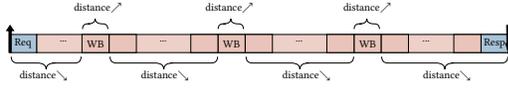
\begin{figure}[tb]
        \centering
        \resizebox{0.8\columnwidth}{!}{\tikzset{every picture/.style={line width=0.75pt}} 

\begin{tikzpicture}[x=0.75pt,y=0.75pt,yscale=-1,xscale=1]

\draw  [fill={rgb, 255:red, 232; green, 193; blue, 182 }  ,fill opacity=0.75 ] (160,270) -- (680,270) -- (680,290) -- (160,290) -- cycle ;
\draw  [fill={rgb, 255:red, 232; green, 193; blue, 182 }  ,fill opacity=0.75 ] (210,270) -- (240,270) -- (240,290) -- (210,290) -- cycle ;
\draw  [fill={rgb, 255:red, 169; green, 196; blue, 215 }  ,fill opacity=1 ] (130,270) -- (160,270) -- (160,290) -- (130,290) -- cycle ;
\draw  [fill={rgb, 255:red, 0; green, 0; blue, 0 }  ,fill opacity=1 ] (134.49,265.73) -- (130.75,260) -- (127,265.73) -- (129.5,265.73) -- (129.5,289.61) -- (131.99,289.61) -- (131.99,265.73) -- cycle ;
\draw  [fill={rgb, 255:red, 232; green, 193; blue, 182 }  ,fill opacity=0.75 ] (360,270) -- (390,270) -- (390,290) -- (360,290) -- cycle ;
\draw  [fill={rgb, 255:red, 232; green, 193; blue, 182 }  ,fill opacity=0.75 ] (510,270) -- (540,270) -- (540,290) -- (510,290) -- cycle ;
\draw  [fill={rgb, 255:red, 169; green, 196; blue, 215 }  ,fill opacity=1 ] (650,270) -- (680,270) -- (680,290) -- (650,290) -- cycle ;
\draw  [fill={rgb, 255:red, 0; green, 0; blue, 0 }  ,fill opacity=1 ] (677.34,279.96) -- (680,290) -- (682.34,279.89) -- (681.09,279.9) -- (680.78,259.98) -- (678.28,260.02) -- (678.59,279.94) -- cycle ;
\draw   (132,294.5) .. controls (132,299.17) and (134.33,301.5) .. (139,301.5) -- (160.5,301.5) .. controls (167.17,301.5) and (170.5,303.83) .. (170.5,308.5) .. controls (170.5,303.83) and (173.83,301.5) .. (180.5,301.5)(177.5,301.5) -- (202,301.5) .. controls (206.67,301.5) and (209,299.17) .. (209,294.5) ;
\draw   (538,264) .. controls (538,260.18) and (536.09,258.27) .. (532.27,258.27) -- (532.27,258.27) .. controls (526.81,258.27) and (524.08,256.36) .. (524.08,252.54) .. controls (524.08,256.36) and (521.35,258.27) .. (515.89,258.27)(518.35,258.27) -- (515.89,258.27) .. controls (512.07,258.27) and (510.16,260.18) .. (510.16,264) ;
\draw  [fill={rgb, 255:red, 232; green, 193; blue, 182 }  ,fill opacity=0.75 ] (240,270) -- (270,270) -- (270,290) -- (240,290) -- cycle ;
\draw  [fill={rgb, 255:red, 232; green, 193; blue, 182 }  ,fill opacity=0.75 ] (330,270) -- (360,270) -- (360,290) -- (330,290) -- cycle ;
\draw  [fill={rgb, 255:red, 232; green, 193; blue, 182 }  ,fill opacity=0.75 ] (390,270) -- (420,270) -- (420,290) -- (390,290) -- cycle ;
\draw  [fill={rgb, 255:red, 232; green, 193; blue, 182 }  ,fill opacity=0.75 ] (480,270) -- (510,270) -- (510,290) -- (480,290) -- cycle ;
\draw  [fill={rgb, 255:red, 232; green, 193; blue, 182 }  ,fill opacity=0.75 ] (540,270) -- (570,270) -- (570,290) -- (540,290) -- cycle ;
\draw  [fill={rgb, 255:red, 232; green, 193; blue, 182 }  ,fill opacity=0.75 ] (620,270) -- (650,270) -- (650,290) -- (620,290) -- cycle ;
\draw   (241,294.5) .. controls (241,299.17) and (243.33,301.5) .. (248,301.5) -- (290.25,301.5) .. controls (296.92,301.5) and (300.25,303.83) .. (300.25,308.5) .. controls (300.25,303.83) and (303.58,301.5) .. (310.25,301.5)(307.25,301.5) -- (352.5,301.5) .. controls (357.17,301.5) and (359.5,299.17) .. (359.5,294.5) ;
\draw   (391,294.5) .. controls (391,299.17) and (393.33,301.5) .. (398,301.5) -- (440.25,301.5) .. controls (446.92,301.5) and (450.25,303.83) .. (450.25,308.5) .. controls (450.25,303.83) and (453.58,301.5) .. (460.25,301.5)(457.25,301.5) -- (502.5,301.5) .. controls (507.17,301.5) and (509.5,299.17) .. (509.5,294.5) ;
\draw   (540.5,294.5) .. controls (540.5,299.17) and (542.83,301.5) .. (547.5,301.5) -- (600,301.5) .. controls (606.67,301.5) and (610,303.83) .. (610,308.5) .. controls (610,303.83) and (613.33,301.5) .. (620,301.5)(617,301.5) -- (672.5,301.5) .. controls (677.17,301.5) and (679.5,299.17) .. (679.5,294.5) ;
\draw   (388,264) .. controls (388,260.18) and (386.09,258.27) .. (382.27,258.27) -- (382.27,258.27) .. controls (376.81,258.27) and (374.08,256.36) .. (374.08,252.54) .. controls (374.08,256.36) and (371.35,258.27) .. (365.89,258.27)(368.35,258.27) -- (365.89,258.27) .. controls (362.07,258.27) and (360.16,260.18) .. (360.16,264) ;
\draw   (238.5,264) .. controls (238.5,260.18) and (236.59,258.27) .. (232.77,258.27) -- (232.77,258.27) .. controls (227.31,258.27) and (224.58,256.36) .. (224.58,252.54) .. controls (224.58,256.36) and (221.85,258.27) .. (216.39,258.27)(218.85,258.27) -- (216.39,258.27) .. controls (212.57,258.27) and (210.66,260.18) .. (210.66,264) ;

\draw (167.5,308.6) node [anchor=north] [inner sep=0.75pt]  [font=\normalsize] [align=left] {distance$\displaystyle \searrow $};
\draw (145,280) node   [align=left] {Req};
\draw (225,280) node  [font=\normalsize] [align=left] {WB};
\draw (375,280) node  [font=\normalsize] [align=left] {WB};
\draw (525,280) node  [font=\normalsize] [align=left] {WB};
\draw (665,280) node  [font=\normalsize] [align=left] {Resp};
\draw (189.51,277.5) node   [align=left] {...};
\draw (299.51,277.5) node   [align=left] {...};
\draw (449.51,277.5) node   [align=left] {...};
\draw (589.51,277.5) node   [align=left] {...};
\draw (524.5,247.4) node [anchor=south] [inner sep=0.75pt]  [font=\normalsize] [align=left] {distance$\displaystyle \nearrow $};
\draw (301.5,308.6) node [anchor=north] [inner sep=0.75pt]  [font=\normalsize] [align=left] {distance$\displaystyle \searrow $};
\draw (457.5,308.6) node [anchor=north] [inner sep=0.75pt]  [font=\normalsize] [align=left] {distance$\displaystyle \searrow $};
\draw (612.5,308.6) node [anchor=north] [inner sep=0.75pt]  [font=\normalsize] [align=left] {distance$\displaystyle \searrow $};
\draw (372.5,247.4) node [anchor=south] [inner sep=0.75pt]  [font=\normalsize] [align=left] {distance$\displaystyle \nearrow $};
\draw (227.5,247.4) node [anchor=south] [inner sep=0.75pt]  [font=\normalsize] [align=left] {distance$\displaystyle \nearrow $};

\end{tikzpicture}}
        \caption{An illustration that shows the WCL of \cua{}}
        \label{fig:simple-bound}
    \end{figure}
\begin{theorem}\label{thm:bound-simple}
Let $m=\min(\mcua, M)$, where \mcua{} is the cache capacity of \cua{}.
The worst-case latency in number of slots of the request of the core under analysis \cua{}, $WCL$, is given by:
\begin{equation}
    WCL =  \big((m + 1)\cdot A \cdot N + 1\big)\cdot{}\Sw,
\end{equation}
where $A = 2(n-1)\cdot w\cdot(n-1)$.
\begin{proof}
The critical instance has \cua{} making a request and receiving a response with the possibility of multiple write-backs from any core in between as shown in Figure~\ref{fig:simple-bound}. 
We split this critical instance into  four parts.
(1) The number of write-backs \cua{} can perform in the worst-case.
(2) The worst-case latency between two write-backs by \cua{}.
(3) The worst-case latency before the first write-back of \cua{}.
(4) The worst-case latency after the last write-back of \cua{} until it receives its response. 
    For (1), in the worst-case, other cores cause $m=\min(\mcua,M)$ write-backs on \cua{}, which is the maximal number of cache lines \cua{} can cache with partition $\mathcal{P}$.
    For (2), we showed that the \dd{} for a given cache line can both increase and decrease under certain situations.
    According to Lemma~\ref{lemma:write-back}, after a write-back, $\dst{\cl, \cua{}}$ can increase from $1$ to $n$ in the worst-case for all $l\in\slx{}$. Note that the distance would be $n$ if the core just after \cua{} was to privately cache the requested line.
    From Corollary~\ref{corollary:dist-range},  $\dst{\cl,\cua{}}$ ranges from $1$ to $n$ and in the worst-case, $\dst{\cl,\cua{}}$ can decrease from $n$ to $1$ for each of the $w$ cache lines $l\in\slx{}$ \emph{before} encountering the next write-back in the worst-case.
    Corollary~\ref{corollary:decreasing-3} shows that it takes $2(n-1)$ periods in the worst case to strictly decrease $\dst{\cl,\cua}$, and the worst-case decrement of \dd{} is by 1.
    Hence, for all $w$ cache lines to decrease from $n$ down to $1$, it takes $A = 2(n-1) \cdot{} w(n-1)$ periods, or $A \cdot N$ slots.
    For (3), in the worst-case, the \dd{} of all $w$ cache line entries in \slx{} decreases from $n$ down $1$ before the first write-back. 
    Following a similar argument as in (2), the WCL in (3) is hence $A$ periods or $A\cdot N$ slots.
    Similar to (3), for (4), after the last write-back, the \dd{} of all $w$ cache line entries in \slx{} decreases from $n$ down $1$ before \cua{} receives its response, and finally, one slot is required for \cua{} to receive its response, which translates to a worst-case latency of $A\cdot{}N + 1$ slots.
    Combining (1), (2), (3) and (4),
    $WCL = ((m-1)\cdot(A\cdot{}N) + (A\cdot{}N) + (A\cdot{}N+1))\cdot{}\Sw = ((m+1)\cdot{}A\cdot{}N+1)\cdot{}\Sw$


    
    
\end{proof}
\end{theorem}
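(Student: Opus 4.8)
The plan is to bound the WCL by decomposing the critical instance --- the interval from the moment \cua{} issues its request for $X$ until it receives the response --- into segments delimited by the write-backs that \cua{} is forced to perform, and then to bound each segment separately using Corollary~\ref{corollary:decreasing-3} and Lemma~\ref{lemma:write-back}. The accounting rests on the facts that \slx{} has $w$ ways, that each way holds a line whose \dd{} lies in $[1,n]$ by Corollary~\ref{corollary:dist-range}, and that \cua{}'s request completes exactly when a minimal-distance core frees an entry that \cua{} can then claim. I would organize the argument around the four quantities named in the statement: (1) the number of write-backs \cua{} can be forced into, (2) the cost of a write-back-free ``decrease phase,'' (3) the cost before the first write-back, and (4) the cost after the last write-back until completion.

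For the number of write-backs (part 1), I would argue that because the shared partition \partition{} admits at most $M$ lines and \cua{}'s own cache holds at most \mcua{} lines, \cua{} can privately cache at most $m=\min(\mcua{},M)$ lines at any time; each forced write-back corresponds to an inclusive eviction of one such line, so at most $m$ write-backs can occur before the response. For the decrease phase (part 2), I would invoke Corollary~\ref{corollary:decreasing-3}: when \cua{} issues no write-backs, strictly decreasing the \dd{} of a single line costs $2(n-1)$ periods, with a decrement of one. In the worst case all $w$ lines start at distance $n$ and must be driven down to $1$, i.e.\ $w(n-1)$ decrements forced to occur one after another, giving $A=2(n-1)\cdot w(n-1)$ periods; since each period consists of $N$ slots under \onetdm{}, this is $A\cdot N$ slots per phase. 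Parts (3) and (4) are instances of the same decrease phase --- one before the first write-back and one after the last --- except that part (4) adds the single final slot in which \cua{} actually receives its response once a minimal-distance core frees an entry.

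Assembling the pieces, I would place a full decrease phase before the first write-back ($A\cdot N$ slots), observe by Lemma~\ref{lemma:write-back} that every write-back can reset all $w$ distances back up to $n$, charge one decrease phase to each of the $m-1$ gaps between the $m$ write-backs ($A\cdot N$ slots each), and charge a final decrease phase plus one slot after the last write-back ($A\cdot N+1$ slots). Summing yields $(m+1)\,A\,N+1$ slots, and multiplying by the slot width \Sw{} gives the claimed bound. The main obstacle I expect is justifying the \emph{additive composition} of the phases: one must confirm that a write-back can genuinely restore every distance to the maximum, so that the worst-case phases are independent and sum rather than overlap; that the $w(n-1)$ decrements can be forced to happen sequentially rather than in parallel --- which is what turns a single $(n-1)$ factor into $w(n-1)$; and that no execution can both exceed $m$ write-backs and still keep \cua{} blocked, so that the $m$ factor is a sound upper bound and not merely one realizable scenario.
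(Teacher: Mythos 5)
Your proposal follows essentially the same route as the paper's proof: the identical four-part decomposition (number of write-backs bounded by $m=\min(\mcua,M)$, a decrease phase of $A\cdot N$ slots between consecutive write-backs via Corollary~\ref{corollary:decreasing-3} and Corollary~\ref{corollary:dist-range}, the same phase before the first and after the last write-back via Lemma~\ref{lemma:write-back}, plus one final slot), assembled into $((m+1)\cdot A\cdot N+1)\cdot\Sw$ exactly as in the paper. The composition concerns you flag at the end are legitimate but are not treated any more rigorously in the paper's own argument.
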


\subsection{Set sequencer: Lowering the WCL}
We propose a hardware extension called a \stsqr{} that enables us to significantly lower the WCL.
Recall that the WCL analysis yields a WCL for a core under analysis \cua{} to be proportional to the minimum of either the cache capacity or \cua{}'s LLC partition size, and cube of the number of cores. 
This bound is clearly large making it difficult to allow cores to share partitions in the LLC. 
When using the \stsqr{}, the WCL bound ends up being \emph{independent} of the minimum of the cache capacity of \cua{} and \cua{}'s LLC partition size $M$.
We illustrate the main idea behind \stsqr{} using Figure~\ref{fig:ss}, which contains two structures, a \gls{qlt}~(\circled{1}) and a \gls{sqs}~(\circled{2}).
The \stsqr{} contains one entry in the \gls{qlt} for each set in the partition that has at least one pending LLC request. 
The entry maps the set to a queue in \gls{sqs}.
For example, $c_1$ has requested for set $3$, but $c_1$ has yet to occupy a free cache line in that set.
This may be because another core may still have to evict a cache line from their private caches before set $3$ has a free cache line.
When there are multiple cores requesting a cache line from the same set, such as set $5$, which maps to queue 2 in \gls{sqs}, then \stsqr{} stores the order in which the requests arrived at the LLC (broadcast order on the shared bus). 
For this set, core $c_2$ would occupy a free cache line in set $5$ before $c_3$, and so on. 
Our key observations and the WCL analysis revealed that the \dd{} increases whenever \cua{} is prevented from occupying a free cache line entry due to another core intercepting it. 
By maintaining order using \stsqr{}, we can guarantee that does not happen. 
\begin{figure}[bt]
    \centering
    \includegraphics[width=0.65\linewidth]{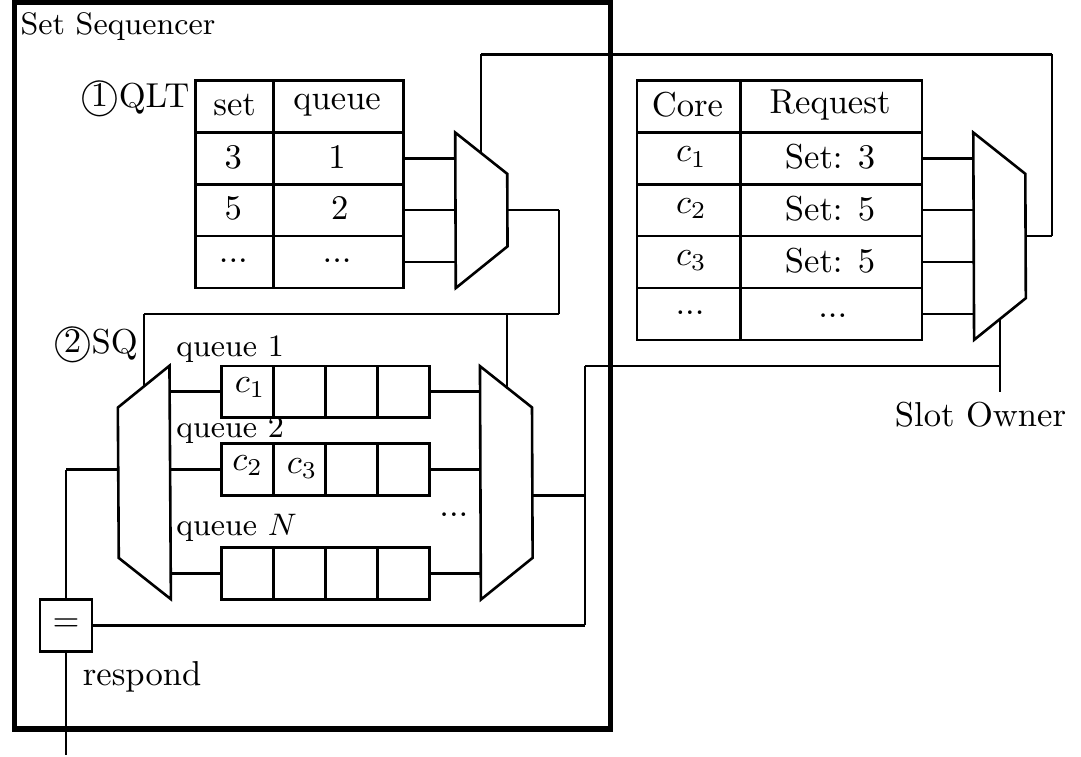}
    \caption{An illustration of \stsqr{}.}\label{fig:ss}
\end{figure}
\begin{figure}[b]
    \centering
    \includegraphics[width=0.80\linewidth]{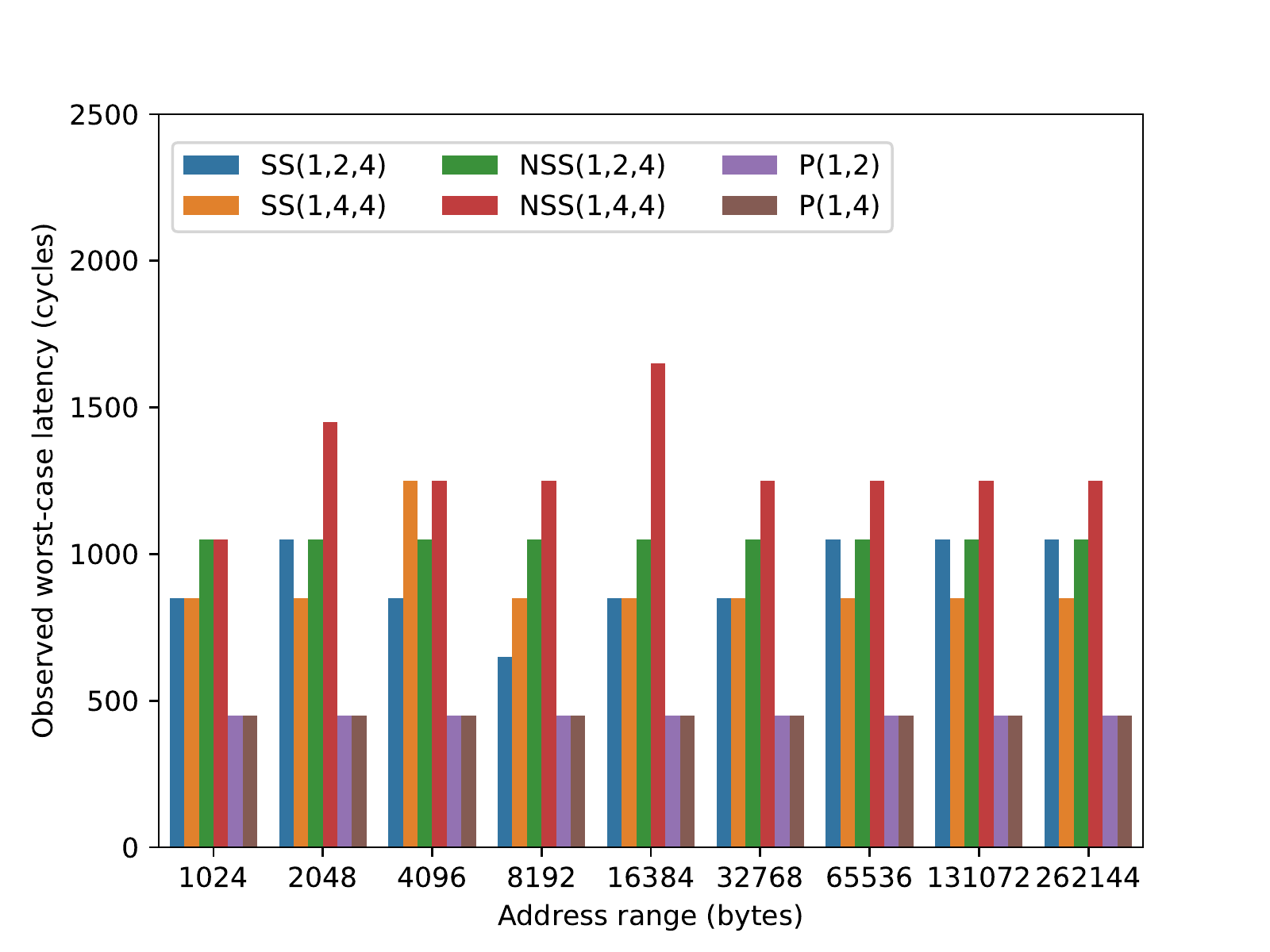}
    \caption{The observed WCL of \ssqx{}, \nssx{} and \ptnx{}.}
    \label{fig:eval-wc}
\end{figure}
\begin{figure*}[t]
    \begin{subfigure}{0.23\textwidth}
        \centering
        \includegraphics[width=\textwidth]{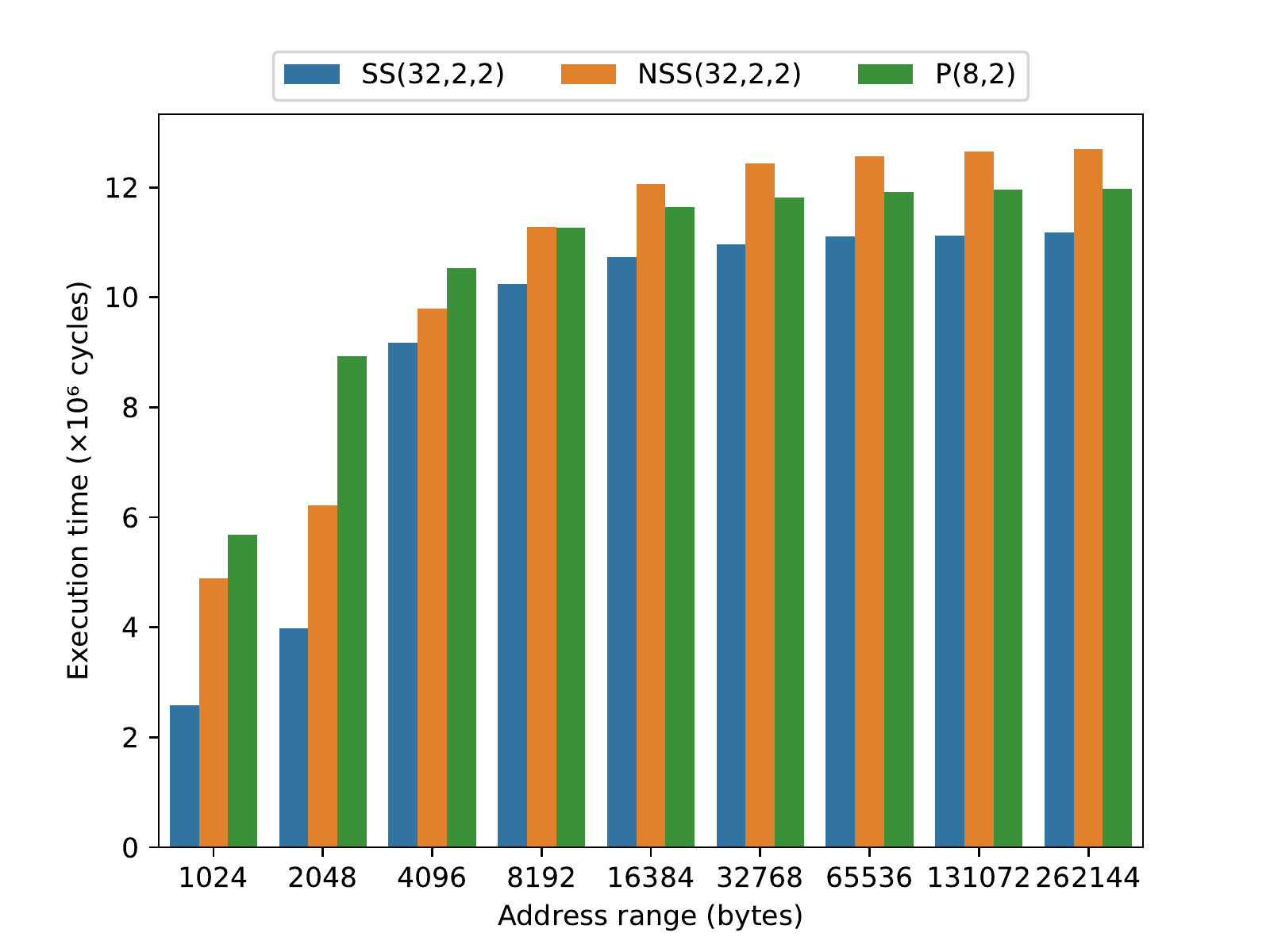}
        \caption{2-core, 4096B partition.}
        \label{fig:eval-large-time-2-core}
    \end{subfigure}
        \begin{subfigure}{0.23\textwidth}
        \centering
        \includegraphics[width=\textwidth]{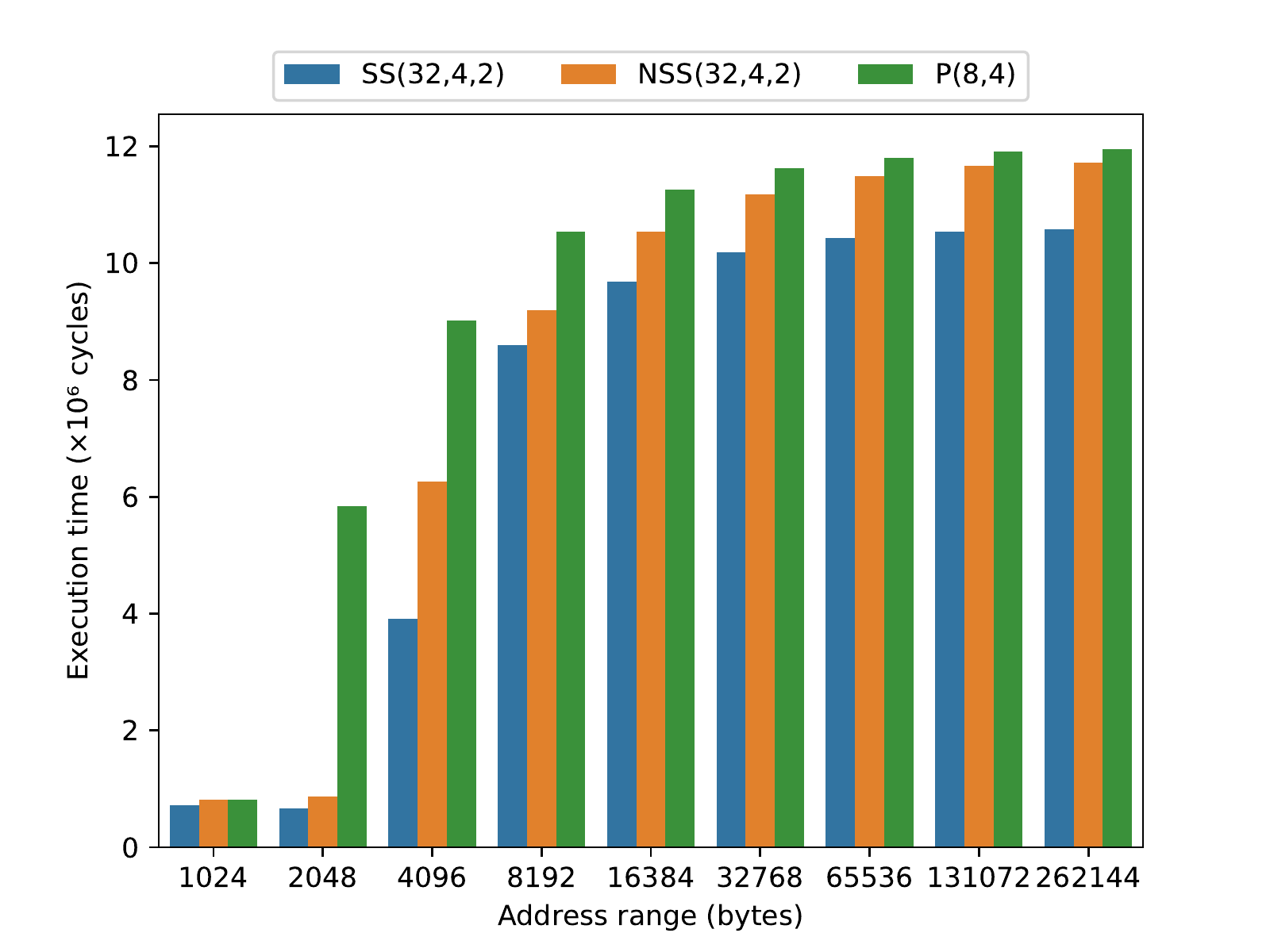}
        \caption{2-core, 8192B partition.}
        \label{fig:eval-large-time-2-core-4-way}
    \end{subfigure}
        \begin{subfigure}{0.23\textwidth}
         \centering
         \includegraphics[width=\textwidth]{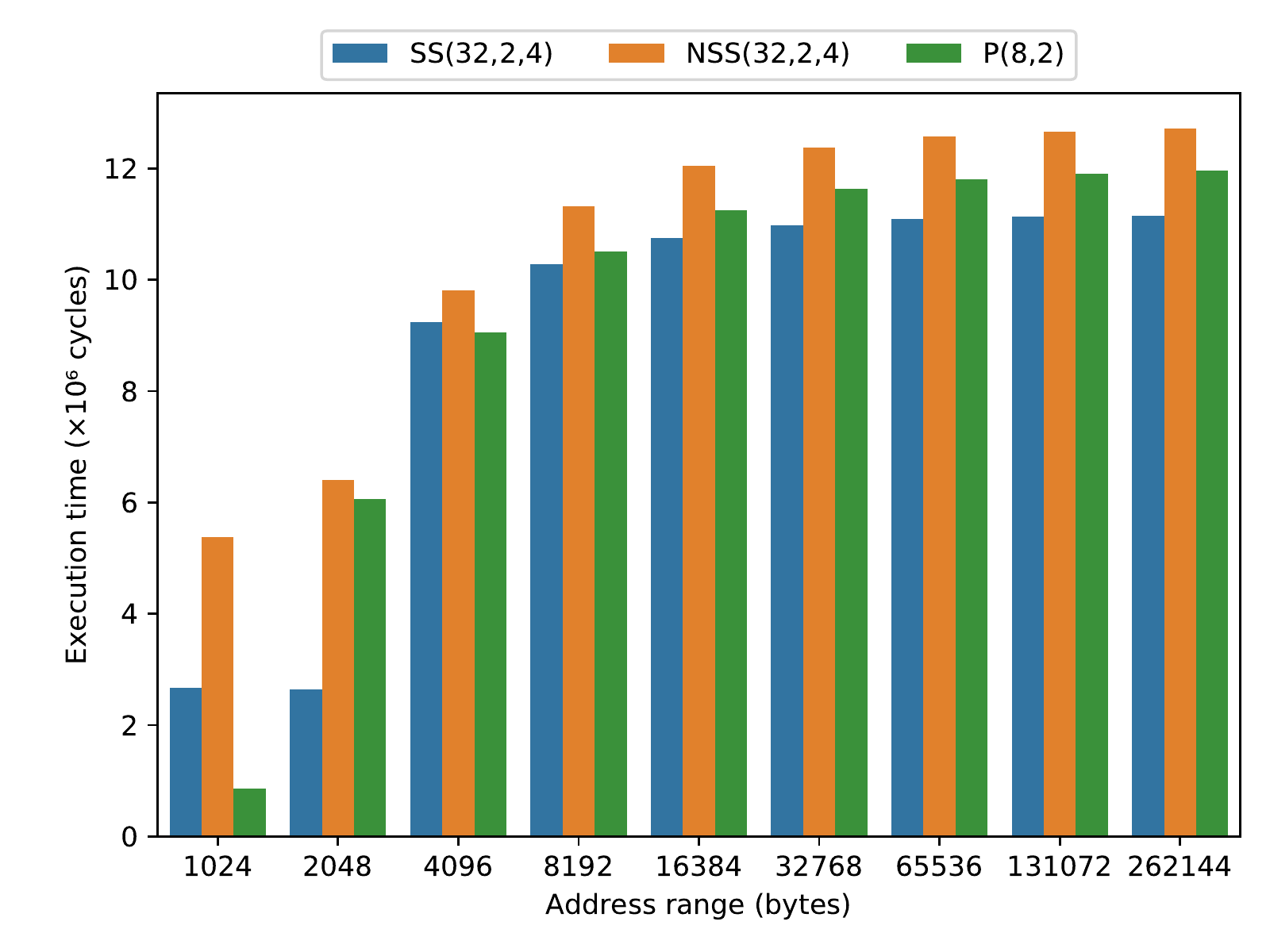}
        \caption{4-core, 4096B partition.}
         \label{fig:eval-large-time-4-set}
    \end{subfigure}
    \begin{subfigure}{0.23\textwidth}
        \centering 
        \includegraphics[width=\textwidth]{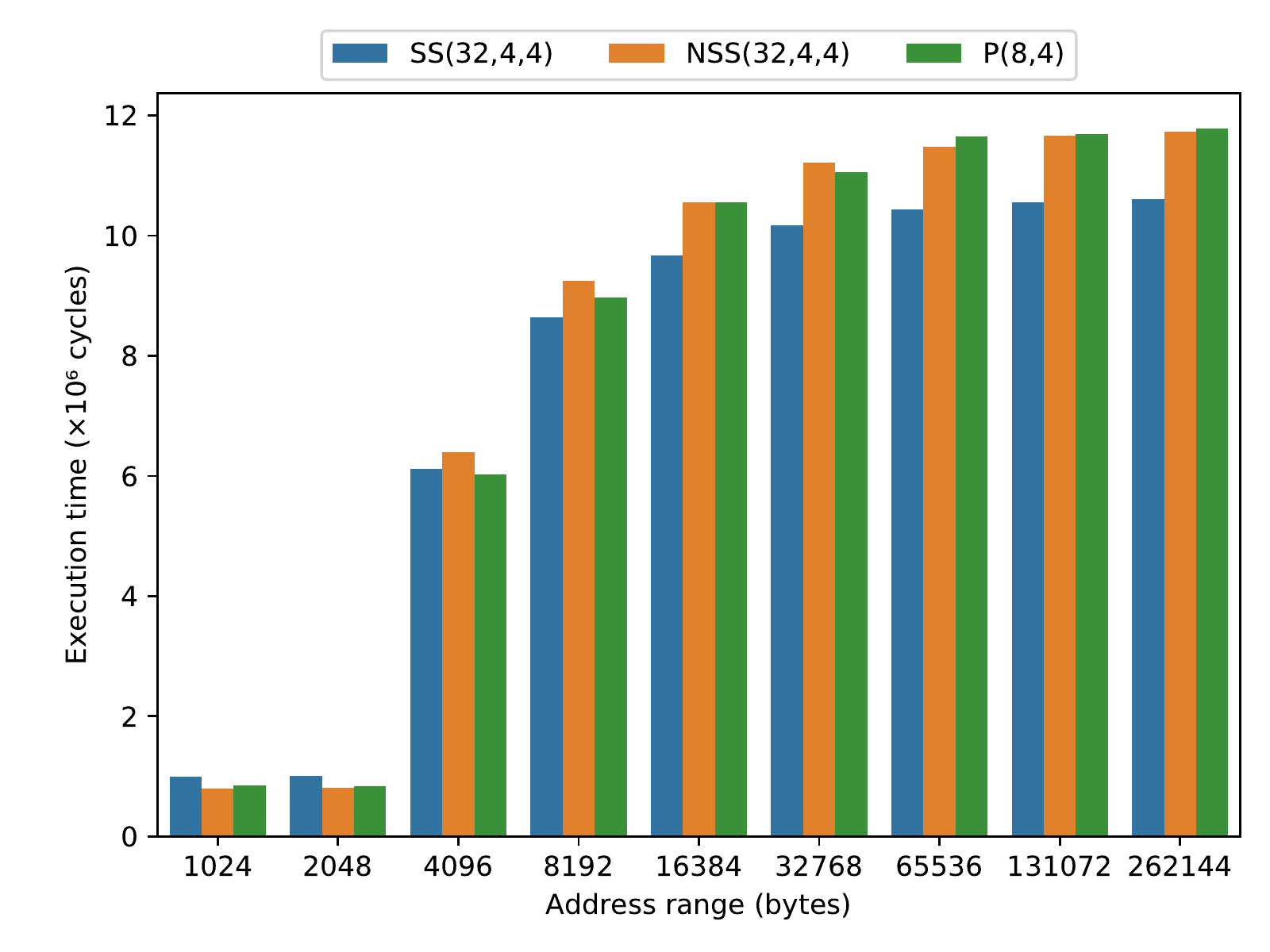}
        \caption{4-core, 8192B partition.}
        \label{fig:eval-large-time-32-set}
    \end{subfigure}
    \caption{The execution time of synthetic workload with fixed total partition size.}\label{fig:eval-large-same-size}
\end{figure*}
We present the WCL when using the \stsqr{} in Theorem~\ref{thm:better-bound}.
\begin{theorem}\label{thm:better-bound}
The WCL of a request of the core under analysis \cua{} when using the \stsqr{}, $WCL_{ss}$, is given by:
\begin{equation}
    WCL_{ss} =  \big(2(n-1)\cdot{}n + 1\big)\cdot{}N\cdot{}\Sw.
\end{equation}
\end{theorem}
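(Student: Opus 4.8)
The plan is to show that the \stsqr{}'s FIFO discipline eliminates the two dominant sources of pessimism in Theorem~\ref{thm:bound-simple}: the multiplicative factor $(m+1)$ produced by \cua{}'s write-backs, and the factor $w$ produced by having to wait for all ways of \slx{} to drain. I would first recast \cua{}'s wait in terms of its position in the per-set queue that the \gls{sqs} maintains for \slx{}. Because that queue records requests in broadcast order on the shared bus, and because each of the $n$ cores sharing \partition{} has at most one outstanding request, at most $n-1$ other cores can precede \cua{}; hence \cua{} occupies a queue position of at most $n$, and no request issued after \cua{}'s can ever be serviced ahead of it. This ordering is what replaces the ad-hoc interception reasoning behind Observation~\ref{obs:3}.

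The second step is to re-establish Corollary~\ref{corollary:decreasing-3} under the \stsqr{}. The claim I would prove is that, with the queue in place, every entry freed in \slx{} is handed to the current head of the queue rather than to an arbitrary core of smaller \dd{}; consequently the distance-increasing execution of Lemma~\ref{lemma:write-back} can no longer occur, and \cua{}'s own write-backs no longer rewind progress. Concretely, whenever a line of \slx{} is selected for eviction, its owning core drains the corresponding write-back within $2(n-1)$ periods — the same bound underlying Corollary~\ref{corollary:decreasing-3}, which caps any core's pending write-backs at $n-1$ and interleaves them with requests — and the freed entry then advances the head of the queue by exactly one.

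Combining these, I would count the worst case as a sequence of at most $n$ service cycles, one per core at or ahead of \cua{} in the queue. Each cycle is a single evict-then-grab that costs at most $2(n-1)$ periods by the previous step, and a final period is charged for \cua{} to occupy the last freed entry and return its response in its own slot. This gives $2(n-1)\cdot n + 1$ periods; since a \onetdm{} period is $N$ slots of width \Sw{}, the bound becomes $\big(2(n-1)\cdot n + 1\big)\cdot N\cdot \Sw$. Neither $m$ nor $w$ survives: the $w$-fold wait disappears because \cua{} now waits only for its queue predecessors, and the $(m+1)$ blow-up disappears because write-backs no longer restart the count.

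The hard part will be the second step — rigorously arguing that the \stsqr{} removes the distance-increasing behavior of Lemma~\ref{lemma:write-back} while simultaneously guaranteeing that \cua{}'s own forced write-backs are absorbed into a cycle's $2(n-1)$ budget instead of contributing a separate $m$-dependent term. In particular, I would have to verify that when the line chosen for eviction is itself cached by \cua{}, the resulting write-back is one of the at most $n-1$ drains already counted in a single window, so it does not push the total past $n$ cycles. Establishing this invariant — that over the whole critical instance \cua{} is delayed by at most $n$ freeing events, each bounded by $2(n-1)$ periods — is where correctness genuinely rests; the remaining arithmetic is routine.
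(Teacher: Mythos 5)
Your proposal is correct and follows essentially the same route as the paper's own proof: \cua{} sits at position at most $n$ in the per-set queue, each queue entry is served within $2(n-1)$ periods because the owning core drains at most $n-1$ pending write-backs interleaved with requests before freeing the entry, and one final period is charged for \cua{} to receive its response, yielding $\big(2(n-1)\cdot n+1\big)\cdot N\cdot\Sw$. Your flagged concern about absorbing \cua{}'s own forced write-backs into the per-cycle budget is a real gap that the paper's terse proof also leaves unaddressed, so you are not missing anything the paper supplies.
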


\begin{proof}
In the worst-case, all other $(n-1)$ cores issue their request before \cua{} sends its request to cache line $X$ to the LLC, and it is the last request in the  \stsqr{} for a full set \slx{}.
For each request in the \stsqr, including \cua{}, it takes $2(n-1)$ slots for the core caching cache lines in \slx{} to write back a cache line and free an entry as a core performs $(n-1)$ write-backs in the worst-case, and the evicted cache line is written-back last.
Note that each such slot accounts for one period, which is $N\cdot{}\Sw$
Finally, \cua{} requires one slot \Sw{} to receive its response, which accounts for another period.
\end{proof}


\section{Evaluation}
Our empirical evaluation is performed with an in-house trace simulator that simulates the cache subsystem of a four-core system with the memory hierarchy as described in section~\ref{sec:sysmodel}.
The L2 cache is a 4-way set-associative cache with 16 sets and the L3 cache is a 16-way set-associative cache with 32 sets that can be partitioned across the four cores. 
The cache line size is 64-byte.

\noindent\textbf{Workload generation.}
We use synthetic workloads consisting of memory requests to random addresses within various address ranges.
We enforce disjoint address ranges for each core to guarantee that accesses to shared data does not occur.
For a certain address range, a core issues the same memory addresses across different partitioned configurations.

\noindent\textbf{Notation.} 
We use the following syntax to express partitioned configurations.
(1) \ssq{s,w,n}: a partition shared among $n$ cores with $s$ sets and $w$ ways with set sequencer, 
(2) \nss{s,w,n}: a partition shared among $n$ cores with $s$ sets and $w$ ways and LLC services contending requests with best effort, and
(3) \ptn{s,w}: a partition with $s$ sets and $w$ ways that is uniquely occupied by a core.
For \ptn{s,w}, each core is assigned equally-sized partition.


\subsection{Worst-case latency}
\noindent\textbf{Workload setup.}
To exercise the worst-case, we enforce a partition size of one set for all configurations.
This is done to force as many conflicts as possible. 
%
%



\noindent\textbf{Results.} Figure~\ref{fig:eval-wc} confirms that the observed WCL of all configurations are within the analytical WCLs, which are $5000$ cycles for \ssqx{}, $979250$ cycles for \nssx{}, and $450$ cycles for \ptnx{}.
Although a distinct partition \ptnx{} yields the lowest WCL, 
recall that we wish to share partitions for cores whose real-time requirements are met with sharing. 
However, there might be others that need distinct partitions \ptnx{}, which do indeed provide the lowest WCL.
This is essential when the number of required functionalities deployed onto a single multicore increases.
In the case of cores sharing a partition, the bound for \ssqx{} can be employed.
\nssx{} shows a higher observed WCL compared to \ssqx{} across all address ranges because \dd{} can increase as mentioned in Observation~3.

\subsection{Partition sharing and utilization}

We next investigate the the impact of partitioning when cores are forced to share a partition.

\noindent\textbf{Workload setup.}
We conduct the experiment with 2-core and 4-core setups, each with a fixed cache capacity that is then partitioned.
In \ssqx{} and \nssx{}, all cores share the same partition while in \ptnx{}, the fixed cache capacity is divided equally between all cores, and the set-associativity is fixed.
%
Figure~\ref{fig:eval-large-time-2-core} shows that when the address range is 1024-byte or 2048-byte, the execution time is the same across \ssqx{}, \nssx{} and \ptnx{}.
This is because the address range is less than or equal to the partition size. 

When the address range exceeds the partition size, \ssqx{} exhibits improved performance when compared to both \nssx{} and \ptnx{}.
In the 2-core setup with 4096-byte of partition size, \ssqx{} achieves an average speedup of $1.34\times$ as is shown in Figure~\ref{fig:eval-large-time-2-core}.
When the capacity is 8192-byte, \ssqx{} achieves an average speed up of $2.13\times$ (Figure~\ref{fig:eval-large-time-2-core-4-way}).
Such performance persists in the 4-core setup where \ssqx{} features an average speedup of $1.10\times$ for 4096-byte partition size (Figure~\ref{fig:eval-large-time-4-set}) and an average speed up of $1.02\times$ for 8192-byte partition size (Figure~\ref{fig:eval-large-time-32-set}).

\section{Conclusion}
This paper provides a complementary approach to strictly partitioning the LLC where cores can share LLC partitions. 
We expect sharing of partitions to be important as the demands for consolidating a large number of safety-critical functionalities onto a single multicore are accelerating.
Using a constrained TDM policy, multiple cores can predictably share the LLC.
However, the resultant WCL is grossly pessimistic. 
We introduced the \stsqr{} hardware structure that reduced the WCL by 2048 times, and empirically evaluated that the WCL bounds hold.

\bibliographystyle{ACM-Reference-Format}
\bibliography{refs-short}


\end{document}